\def \VersionAuthor {}
	\newcommand{\AuthorVersion}[1]{#1}
	\newcommand{\SpringerVersion}[1]{}
	\newcommand{\AuthorVersion}[1]{}
	\newcommand{\SpringerVersion}[1]{#1}
\definecolor{darkblue}{rgb}{0.0,0.0,0.6}
\definecolor{darkgreen}{rgb}{0, 0.5, 0}
\definecolor{darkpurple}{rgb}{0.7, 0, 0.7}
\definecolor{darkblue}{rgb}{0, 0, 0.7}
	\newcommand{\LongVersion}[1]{\ifdefined\VersionWithComments{\color{red!40!black}#1}\else#1\fi}
	\newcommand{\ShortVersion}[1]{\ifdefined\VersionWithComments{\color{black!40}#1}\fi}
	\newcommand{\LongVersion}[1]{\ifdefined\VersionWithComments{\color{black!40}#1}\fi}
	\newcommand{\ShortVersion}[1]{\ifdefined\VersionWithComments{\color{red!40!black}#1}\else#1\fi}
 	\definecolor{colorok}{RGB}{255,0,255}
	\definecolor{colorok}{RGB}{0,0,0}
\newcommand{\ie}{\textcolor{colorok}{i.\,e.,}}
\newcommand{\vect}{\ensuremath{\textcolor{colorok}{\mathbb{V}}}}
\newcommand{\bounds}{\ensuremath{\textcolor{colorok}{\mathit{bounds}}}}
\newcommand{\boundinf}{\ensuremath{\textcolor{colorok}{\mathit{inf}}}}
\newcommand{\boundsup}{\ensuremath{\textcolor{colorok}{\mathit{sup}}}}
\newcommand{\action}{\ensuremath{\textcolor{colorok}{a}}}
\newcommand{\Alphabet}{\ensuremath{\textcolor{colorok}{\Sigma}}}
\newcommand{\clock}{\ensuremath{\textcolor{colorok}{x}}} 
\newcommand{\Clock}{\ensuremath{\textcolor{colorok}{\mathbb{X}}}} 
\newcommand{\clockval}{\ensuremath{\textcolor{colorok}{v}}}
\newcommand{\clockvalzero}{\ensuremath{\textcolor{colorok}{\vec{0}}}}
\newcommand{\ClockVals}{\ensuremath{\textcolor{colorok}{\mathcal{V}}}}
\newcommand{\compOp}{\ensuremath{\color{colorok}\bowtie}}
\newcommand{\config}{\ensuremath{\textcolor{colorok}{\gamma}}}
\newcommand{\edge}{\ensuremath{\textcolor{colorok}{\delta}}}
\newcommand{\Edges}{\ensuremath{\textcolor{colorok}{\Delta}}}
\newcommand{\guard}{\ensuremath{\textcolor{colorok}{g}}}
\newcommand{\GuardsXP}{\ensuremath{\textcolor{colorok}{\mathcal{G}(\Clock,\Param)}}}
\newcommand{\interval}{\ensuremath{\textcolor{colorok}{\mathcal{I}}}}
\newcommand{\IntervalsQ}{\ensuremath{\textcolor{colorok}{\mathbb{I}_{\grandqplus}}}}
\newcommand{\loc}{\ensuremath{\textcolor{colorok}{q}}} 
\newcommand{\locinit}{\ensuremath{\textcolor{colorok}{q_0}}} 
\newcommand{\locfinal}{\ensuremath{\textcolor{colorok}{q_f}}} 
\newcommand{\Loc}{\ensuremath{\textcolor{colorok}{Q}}} 
\newcommand{\Param}{\ensuremath{\textcolor{colorok}{\mathbb{P}}}} 
\newcommand{\param}{\ensuremath{\textcolor{colorok}{\lambda}}} 
\newcommand{\paraml}{\textcolor{colorok}{\lambda^l}} %
\newcommand{\paramu}{\textcolor{colorok}{\lambda^u}} %
\newcommand{\pval}{\ensuremath{\textcolor{colorok}{p}}} 
\newcommand{\PTBP}{\ensuremath{\textcolor{colorok}{\mathcal{N}}}}
\newcommand{\update}{\ensuremath{\textcolor{colorok}{\mathit{up}}}}
\newcommand{\Updates}{\ensuremath{\textcolor{colorok}{\mathrm{Updates}}}}
\newcommand{\styleTwoCM}[1]{\ensuremath{\textcolor{colorok}{\mathbf{#1}}}}
\newcommand{\TwoCMconfig}{\styleTwoCM{s}}
\newcommand{\TwoCMCounter}{\styleTwoCM{C}}
\newcommand{\TwoCMCounterValue}{\styleTwoCM{c}}
\newcommand{\TwoCMInstructions}{\styleTwoCM{K}}
\newcommand{\TwoCMinstruction}{\styleTwoCM{k}}
\newcommand{\TwoCMinstructionZero}{\TwoCMinstruction_0}
\newcommand{\TwoCMinstructionAcc}{\TwoCMinstruction_{\mathit{acc}}}
\newcommand{\TWOCM}{\styleTwoCM{M}}
\newcommand{\boundinfin}[2]{\ensuremath{\textcolor{colorok}{\boundinf}(#1,#2)}}
\newcommand{\boundsupin}[2]{\ensuremath{\textcolor{colorok}{\boundsup}(#1,#2)}}
\newcommand{\dec}[1]{{\color{green!50!black}#1}}
\newcommand{\undec}[1]{{\color{red!70!black}#1}}
\newcommand{\celldec}[1]{{\cellcolor{green!50!white}\bfseries{}#1}}
\newcommand{\cellundec}[1]{{\cellcolor{red!50!white}\em{}#1}}
\newcommand{\cellopen}[1]{{\cellcolor{yellow!50!white}\em{}#1}}
\newcommand{\grandn}{\ensuremath{\textcolor{colorok}{\mathbb N}}}
\newcommand{\grandq}{\ensuremath{\textcolor{colorok}{\mathbb Q}}}
\newcommand{\grandqplus}{\ensuremath{\textcolor{colorok}{\grandq_{+}}}} 
\newcommand{\grandr}{\ensuremath{\textcolor{colorok}{\mathbb R}}}
\newcommand{\grandrplus}{\ensuremath{\textcolor{colorok}{\grandr_{+}}}} 
\newcommand{\state}{\ensuremath{state}}
\newcommand{\val}{\ensuremath{val}}
\newcommand{\pid}{\ensuremath{i}}
\newcommand{\Execs}{\mathcal{E}}
\newcommand{\client}{c}
\newcommand{\factory}{f}
\newcommand{\goal}{g}
\newcommand{\produ}{p}
\newcommand{\recallResult}[2]
{%
	\smallskip

	\noindent\fcolorbox{black}{green!15}{
		\begin{minipage}{.95\columnwidth}
			\noindent\textbf{\cref{#1} (recalled).}
			{\em{}#2}
		\end{minipage}
	}
}
	\newcommand{\marginX}{\marginnote{\huge{\quad\quad\textbf{!}\quad\quad}}}
	\newcommand{\instructions}[1]{{\color{red}\marginX{}\textbf{Instructions: #1}}}
	\newcommand{\ea}[1]{{\color{blue}\marginX{}\textbf{[\'Etienne}: #1]}}
	\newcommand{\bd}[1]{{\color{red!50!black}\marginX{}\textbf{[BenoÃ®t}: #1]}}
	\newcommand{\dl}[1]{{\color{green!50!black}\marginX{}\textbf{[Didier}: #1]}}
	\newcommand{\nb}[1]{{\color{purple!90!black}\marginX{}\textbf{[Nathalie}: #1]}}
	\newcommand{\pf}[1]{{\color{orange!90!black}\marginX{}\textbf{[Paulin}: #1]}}
	\newcommand{\reviewer}[2]{\mbox{}{\color{red}\marginX{}\textbf{[Reviewer #1}: ``#2'']}}
	\newcommand{\todo}[1]{\mbox{}{\color{red}{\marginX{}\textbf{TODO}\ifx#1\\\else:\ \fi #1}}} 
	\newcommand{\toutfaux}[1]{}
	\newcommand{\instructions}[1]{}
	\newcommand{\bd}[1]{}
	\newcommand{\ea}[1]{}
	\newcommand{\dl}[1]{}
	\newcommand{\nb}[1]{}
	\newcommand{\pf}[1]{}
	\newcommand{\reviewer}[2]{}
	\newcommand{\todo}[1]{}
	\newcommand{\toutfaux}[1]{}
\title{Parametric Timed Broadcast Protocols\thanks{%
	\LongVersion{This is the author (and extended) version of the manuscript of the same name published in the proceedings of the 20th International Conference on Verification, Model Checking, and Abstract Interpretation (VMCAI 2019).
	The final version is available at \url{http://dx.doi.org/10.1007/978-3-030-11245-5_23}.
	This version contains additional examples and all proofs, and fixes a typo in the name of the problems considered.}
	This work is partially supported by the ANR national research program PACS (ANR-14-CE28-0002)
	and
	by ERATO HASUO Metamathematics for Systems Design Project (No.\ JPMJER1603), JST.
}}
\author{\'Etienne Andr\'e\inst{1,2,3}\orcidID{0000-0001-8473-9555}\Letter
	\and
	Benoit Delahaye\inst{4}\orcidID{0000-0002-9104-4361}
	\and
	Paulin Fournier\inst{4}
	\and
	Didier Lime\inst{5}\orcidID{0000-0001-9429-7586}
}
\institute{Université Paris 13, LIPN, CNRS, UMR 7030, F-93430, Villetaneuse, France
\and
JFLI, CNRS, Tokyo, Japan
\email{eandre93430@lipn13.fr}
\and
National Institute of Informatics, Tokyo, Japan
\and
Université de Nantes, LS2N UMR CNRS 6004, Nantes, France
\and
\'{E}cole Centrale de Nantes, LS2N UMR CNRS 6004, Nantes, France
}
\begin{document}

\AuthorVersion{
	\pagestyle{plain}
}

\maketitle

\ifdefined \VersionWithComments
	\textcolor{red}{\textbf{This is the version with comments. To disable comments, comment out line~3 in the \LaTeX{} source.}}
\fi

\instructions{VMCAI 2019: 
Submissions are restricted to 20 pages in Springer’s LNCS format, not counting references. Additional material may be placed in an appendix, to be read at the discretion of the reviewers and to be omitted in the final version. Formatting style files and further guidelines for formatting can be found at the Springer website.}


\begin{abstract}
	In this paper we consider state reachability in networks composed of many identical processes running a parametric timed broadcast protocol (PTBP).
	PTBP are a new model extending both broadcast protocols and parametric timed automata.
	This work is, up to our knowledge, the first to consider the combination of both a parametric network size and timing parameters in clock guard constraints.
	Since the communication topology is of utmost importance in broadcast protocols, we investigate reachability problems in both clique semantics where every message reaches every processes, and in reconfigurable semantics where the set of receivers is chosen non-deterministically.
	In addition, we investigate the decidability status depending on whether the timing parameters in guards appear only as upper bounds in guards\LongVersion{ (U-PTBP)}, as lower bounds\LongVersion{ (L-PTBP)} or when the set of parameters is partitioned in lower-bound and upper-bound parameters\LongVersion{ (L/U-PTBP)}.\ea{a reviewer FSTTCS asked to remove these notations from the abstract}
\end{abstract}

\LongVersion{
\keywords{Parameterized systems, parametric timed model checking}
}

\ea{Il faut se mettre d'accord si au pluriel on écrit PTAs / PTBPs ou PTA/PTBP. Je n'ai aucune préférence personnellement. EDIT: je crois que Paulin a implicitement choisi de ne pas mettre de ``s''. OK.}

\ea{Un(e) reviewer de CONCUR 2018 demande à ce qu'on se compare à l'article suivant :
Laura Bozzelli, Sophie Pinchinat:
Verification of gap-order constraint abstractions of counter systems. Theor. Comput. Sci. 523: 1-36 (2014)
}

\section{Introduction}

The application of model-checking to real-life complex systems faces several problems, and for many of them the use of parameters, \ie{} symbolic constants representing an unknown quantity can be part of the solution.
First, for big systems, the so-called \emph{state-space explosion} limits the practical applicability of model-checking. Such big systems however are in general specified as the composition of smaller systems. A particularly interesting setting is the one in which all the components are identical, such as in many communication protocols. The number of involved components can then be abstracted away as a \emph{parameter}, with the hope of both overcoming the state-space explosion, and obtaining more useful answers from the model-checking process, such as ``for which sizes of the system does some property hold?''.
%
Second, the earlier in the development phase verification can be applied, the less costly will fixing the problems be. On the other hand, the earlier the verification is applied, the less information we have on the final system, in particular on many timing features, such as transmission times, watchdogs, etc.
Parameters can also be useful here by abstracting away the precise values of some yet unknown features, and at the same time allowing their dimensioning.

In this paper, we propose to combine two different types of parameters, namely the \emph{number of identical processes} and the \emph{timing features}, and study the decidability of classic parametric decision problems in the resulting formalism.
Both types of parameters, when introduced separately in timed automata-based formalisms, result in hard problems undecidable even in restricted settings.

Timed automata~\cite{AD94} extend finite-state automata with clocks, \ie{} real-valued variables that can be compared to constants in guards, and reset along transitions.
Parametric timed automata (PTA)~\cite{AHV93} allow to replace constants with unknown parameters in timing constraints.
The most basic verification question, ``does there exist a value for the parameters such that some location is reachable'' is undecidable with as few as 1 integer- or rational-valued parameter~\cite{Miller00,BBLS15}, or when only 1 clock is compared to a unique parameter~\cite{Miller00} (with additional clocks); see \cite{Andre18STTT} for a survey.
The main syntactic subclass of PTA for which decidability is obtained is L/U-PTA~\cite{HRSV02}, in which the set of parameters is partitioned into lower-bound parameters (\ie{} parameters always compared as a lower bound in a clock guard) and upper-bound parameters (always as upper bounds). L/U-PTA have been shown~\cite{HRSV02} to be expressive enough to model classical examples from the literature, such as root contention or Fischer's mutual exclusion algorithm for instance.

Broadcast protocol networks~\cite{DSTZ-fsttcs12,delzanno2011power,delzanno2011parameterized,delzanno2010parameterized}, allow treating the size of a network as an unknown parameter. Here also the most simple basic verification question ``does there exist a value for the parameter such that some location is reachable by a process'' is undecidable when considering arbitrary communication topologies~\cite{delzanno2010parameterized}. However one can regain decidability by considering different communication topology settings. One option is to limit the topologies to cliques (every process receives every message)~\cite{delzanno2011power,delzanno2011parameterized,delzanno2010parameterized}. Another is to consider reconfigurable broadcasts in which the set of receivers is chosen non-deterministically at each step~\cite{DSTZ-fsttcs12}.
A timed version of this broadcast protocol was studied in~\cite{ADRST16}.
In the clique topology for this network, the reachability problem is decidable only when there is a single clock per process.



\paragraph*{Contributions}
In this work, we provide one more level of abstraction to the formalisms of the literature by proposing \emph{parametric timed broadcast protocols} (PTBP), \ie{} a new formalism made of an arbitrary number of identical timed processes in which timing parameters can be used.
A combination of two kinds of parameters seems natural, for example when designing and verifying communication protocols. Indeed, those protocols are required to work independently of the number of participants (hence the parametric size of networks) and the time constraints in each process are of paramount importance and thus could be tweaked in early development thanks to timing parameters.
This work is, up to our knowledge, the first to consider the combination of both a parametric network size and timing parameters in clock guard constraints.
We consider the following problems: does there exist a number of processes for which the set of timing parameter valuations allowing to reach a given location for one run (``EF''), or for all runs (``AF'') is non-empty (or universal)?
This gives rise to 4 problems: EF-existence, EF-universality, AF-existence and AF-universality.
As PTBP can be seen as an extension of both broadcast protocols and parametric timed automata, undecidability follows immediately from the existing undecidability results known for these two formalisms.
However, combining decidable subclasses of both formalisms is challenging, and does not necessarily make the EF and AF problems decidable for PTBP.

The communication topology is of utmost importance in broadcast protocols, and we therefore investigate reachability problems depending on the broadcast semantics.
In the reconfigurable semantics (where the set of receivers is chosen non-deterministically), AF-existence and AF-universality are decidable for 1-clock PTBP, and undecidable from 3~clocks even for L/U-PTBP with the same parameters partitioning as in L/U-PTA (the 2-clock case is equivalent\ea{en fait, on l'a pas montré} to a well-known open problem for PTA).\bd{cette phrase est moche}
The AF results may not seem surprising, as they resemble equivalent results for PTA.
However, EF-existence and EF-universality becomes undecidable even for 1-clock PTBP: this result comes in contrast with both non-parametric timed broadcast protocols and PTA for which the 1-clock case is decidable.

In the clique semantics (where every message reaches every process), we show that AF problems are undecidable even without any clock.
Then, as it is known that 2 clocks (and no parameter) yield undecidability, we study EF problems over 1 clock.
We investigate the decidability status depending on whether the timing parameters in guards appear only as upper bounds in guards (U-PTBP), as lower bounds (L-PTBP) or when the set of parameters is partitioned in lower-bound and upper-bound parameters (L/U-PTBP).
We show that L/U-PTBP become decidable for EF-existence (but not universality) when the parameter domain is bounded.
For EF-universality, decidability is obtained only for L-PTBP and U-PTBP for a parameter domain bounded with closed bounds.
%
The decidability border between L/U-PTA with a bounded parameter domain with closed bounds, and L/U-PTA with closed bounds was already spotted in~\cite{ALime17}, for liveness properties.
Our contributions are summarized in \cref{table:summary} (page~\pageref{table:summary}).


\paragraph*{Related work}
The concept of identical processes has been addressed in various settings, such as regular model checking~\cite{BJNT00}, or network of identical timed processes~\cite{AJ03,ADM04,ADRST11}.

To the best of our knowledge, combining two types of parameters (\ie{} discrete and continuous) was very little studied---with a few exceptions.
In \cite{DKRT97}, an attempt is made to mix discrete and continuous timing parameters (in an even non-linear fashion, \ie{} where parameters can be multiplied by other parameters).
However, the approach is fully \emph{ad-hoc} and addresses an extension of PTA, for which problems are already undecidable.
In \cite{LSLD15,delzanno2004automatic}, security protocols are studied with unknown timing constants, and an unbounded number of participants.
	However, the focus is not on decidability, and the general setting is undecidable.\ea{note pour moi : rechecker}
In~\cite{AKPP16}, action parameters (that can be seen as Booleans) and continuous timing parameters are combined (only linearly though) in an extension of PTA; the mere emptiness of the sets of action and timing parameters for which a location is reachable is undecidable.
In contrast, we exhibit in this work some decidable cases.

\paragraph*{Outline}
We introduce necessary definitions in \cref{section:definitions}.
We then study the existence and the universality problems for which a state is reachable and unavoidable respectively, in reconfigurable semantics (\cref{section:reconfigurable}) and clique semantics (\cref{section:clique}). We then investigate a restriction of the protocols, namely the L/U restriction (\cref{section:LU}).
We conclude in \cref{section:conclusion}.

\section{Definitions}\label{section:definitions}

\LongVersion{
\subsection{Notations}
}
We denote by $\grandn$, $\grandqplus$, and $\grandrplus$ the sets of all natural, non-negative rational, and non-negative real numbers respectively.
$[a,b]$ denotes the interval containing all rational numbers $\clock$ such that $\clock\leq b$ and $\clock\geq a$. As usual, we write $(a,b]$ to exclude $a$ from this set and $[a,b)$ to exclude $b$ (in which case we allow $b=+\infty$).
We denote by $\IntervalsQ$ the set of all rational intervals.\ea{a priori utilisé une seule fois dans tout le papier…}


Given a set $E$, and an integer $n\in\grandn$ we denote $\vect_n(E)$ the set of all vectors composed by $n$ elements of $E$.
We denote $\vect(E)$ the set of all vectors \ie{} $\vect(E)=\cup_{n\in\grandn}\vect_n(E)$.

Given a set of clocks $\Clock$, a valuation of $\Clock$ is a function of $\Clock\to\grandrplus$.
We denote by $\ClockVals(\Clock)$ the set of all valuations of $\Clock$ or just $\ClockVals$ when $\Clock$ is clear from the context.\ea{en fait, c'est systématiquement ce dernier cas, dans le papier}
The valuation assigning $0$ to all clock is written $\clockvalzero$.
Given a valuation $\clockval\in\ClockVals$ and a real number $t$ we denote by $\clockval+t$ the valuation $\clockval'$ such that for all $\clock\in \Clock$, $\clockval'(\clock)=\clockval(\clock)+t$, and $\clockval-t$ (if it exists
) the valuation such that $(\clockval-t)+t=\clockval$.
Given a set of clocks $\Clock$ and a set of parameters $\Param$ we write $\GuardsXP$ for the set of all sets of constraints of the form $\clock \compOp a$ with $\clock\in \Clock$, ${\compOp} \in \{<,\leq,=,\geq,>\}$ and $a\in \grandqplus\cup \Param$.\bd{On autorise pas les conjonctions/disjonctions? -- OK, compris, ``sets'' of constraints.}

We denote by $\Updates(\Clock)$ the set of updates of the clocks, where an update is a function $up:\ClockVals\to\ClockVals$ such that for all $\clock\in \Clock$, either $\update(\clockval)(\clock)=\clockval(\clock)$ or $\update(\clockval)(\clock)=0$.
When convenient we represent the update function with the set $\{\clock_1,\dots,\clock_k\}$ 
 representing that clocks $\clock_1$ to $\clock_k$ are reset to $0$ while other clocks (here $\clock_i$ with $i>k$) are left unchanged.

Given a clock valuation $\clockval\in \Clock\to\grandrplus$ and a rational valuation of the variables $\pval:\Param \to\grandqplus$  we say that the valuation $\clockval$ satisfies a guard $\guard \in\GuardsXP$, written $\clockval\models_{\pval} \guard$ if for all $\clock\compOp a\in \guard$ 
either $a\in \grandqplus$ and $\clockval(\clock)\compOp a$ or $a\in \Param$ and $\clockval(\clock)\compOp \pval(a)$.



\LongVersion{
\subsection{Parametric timed broadcast protocols}
}

We now introduce parametric 
timed broadcast protocols (PTBP), which are timed broadcast protocols \cite{ADRST11} extended with timing parameters in clock guards.
Equivalently, PTBP can be seen as a PTA~\cite{AHV93} augmented with communication features.

\begin{definition}[Parameterized timed broadcast protocol]
	A \emph{Parameterized timed broadcast protocol} (PTBP) is a tuple $\PTBP=(\Loc,\Clock,\Alphabet,\Param,\locinit,\Edges)$ where:
	\begin{itemize}
	\item $\Loc$ is a finite set of states; 
	\item $\Clock$ is a finite set of clocks; 
	\item $\Alphabet$ is the finite communication alphabet; 
	\item $\Param$ is a finite set of timing parameters; 
	\item $\locinit\in \Loc$ is the initial state; and 
	\item $\Edges\subseteq \Loc\times \GuardsXP \times Act \times \Updates(\Clock) \times \Loc$ is the edge relation, where $Act$ is the set of actions composed of:
	\begin{itemize}
	\item internal actions:  $\epsilon$; 
	\item broadcasts of a message $m\in\Alphabet$: $!!m$; and 
	\item reception of a message $m\in \Alphabet$: $??m$.
	\end{itemize}

	\end{itemize}

\end{definition}
A PTBP is a U-PTBP, L-PTBP, or L/U-PTBP if
	all timing parameters appear only as upper bounds in guards (\ie{} of the form $\clock < \param$ or $\clock \leq \param$),
	only as lower bounds (\ie{} of the form $\clock > \param$ or $\clock  \geq \param$),
	or if the set of parameters~$\Param$ is partitioned into lower-bound and upper-bound parameters,
	respectively.
	
A \emph{bounded} PTBP is a pair $(\PTBP,\bounds)$ where $\PTBP$ is a PTBP and $\bounds: \Param \rightarrow \interval_{\grandqplus}$ are bounds on the parameters that assign to each parameter~$\param$ an interval
	$[\boundinf, \boundsup]$,
		$(\boundinf, \boundsup]$,
		$[\boundinf, \boundsup)$,
		or
		$(\boundinf, \boundsup)$,
	with $\boundinf, \boundsup \in \grandn$.
	We use $\boundinfin{\param}{\bounds}$ and $\boundsupin{\param}{\bounds}$ to denote the infimum and the supremum of~$\param$, respectively.	
A bounded PTBP is a \emph{closed PTBP} if, for each parameter $\param$, its ranging interval $\bounds(\param)$ is of the form $[\boundinf, \boundsup]$. Otherwise it is open bounded.
Abusing notation we say that a parameter valuation~$\pval$ belongs to a bound $\bounds$, written $\pval\in\bounds$, if for all parameters $\param$, $\pval(\param)\in\bounds(\param)$.

\ea{il faudra citer \cite{HRSV02} quelque part (pas forcément là)}

%
%
%
%


\begin{example}
\label{ex:PTBP}
An example of a PTBP is given in \cref{fig:PTBP}. This PTBP is composed of an initial state $q_0$, two states $\factory$ and $\client$ representing a factory and a client, three counting states $1$, $2$ and $3$ and a goal state $\goal$. The set of clocks is the singleton $\{x\}$ and the communication alphabet is composed of two messages $\produ$ and $f$.
There are two timing parameters $pt$ and $tl$ representing respectively the production time and the time limit. Notice that this PTBP is in fact an L/U-PTBP since the parameter $pt$ appears only in guards as a lower bound and $tl$ only as an upper bound.

\begin{figure}[tb]
\centering
\scalebox{.8}{
\begin{tikzpicture}[node distance=2cm]
\node[draw,circle,initial,initial text=] (q0) {$q_0$};
\node[draw,circle] (client) [above right of =q0,yshift=-0.5cm] {\client};
\node[draw,circle] (facto) [below right of =q0,yshift=0.5cm] {\factory};
\node[draw,circle] (1) [right of = client] {1};
\node[draw,circle] (2) [right of = 1] {2};
\node[draw,circle] (3) [right of = 2] {3};
\node[draw,circle] (goal) [right of = 3] {\goal};

\path[draw,->,above]
(q0) edge[sloped] node {$??f$} (client)
(q0) edge[sloped] node {$!!f,\{x\}$} (facto)
(client) edge node {$??\produ$} (1)
(1) edge node {$??\produ$} (2)
(2) edge node {$??\produ$} (3)
(3) edge node {$x< tl,\epsilon$} (goal)

(facto) edge[loop right] node[right] {$x\geq pt,!!\produ,\{x\}$} (facto)
;
\end{tikzpicture}
}

\caption{Example of a (L/U-)PTBP\label{fig:PTBP}}
\end{figure}
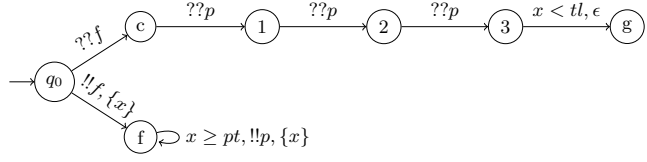
\end{example}


\LongVersion{
\subsection{Networks}\label{sec:sem}
}
\ea{il faut donner l'équivalence d'un PTA à partir d'un PTBP quand le nombre de participants (concept indéfini pour l'instant) est~1.
Il faut aussi donner la déf (au moins son nom) des L/U-, U- et L-PTA puisqu'on les utilise}%
\pf{voir la remark a la fin de 3.1}%
We now define the semantics of parameterized networks of PTBP. This semantics is illustrated in \cref{ex:run} after the formal definition.

A network is composed of a multitude of processes all running the same protocol~$\PTBP$.
Let $N$ denote the number of processes, or \emph{size} of the network.

Formally, a configuration $\config$ of a network running a parametric timed broadcast protocol $\PTBP=(\Loc,\Clock,\Alphabet,\Param,\locinit,\Edges)$ is a vector $\config\in\vect(\Loc\times\ClockVals)$.
Intuitively, a configuration $\config$ with $\config[\pid]=(\loc,\clockval)$ means that the process $\pid$ is in state~$\loc$ and with clock valuation~$\clockval$.

Given a configuration $\config$ with $N$ processes and a process $\pid$, we write $\state(\config[\pid])$ for the state and $\val(\config[\pid])$ for the valuation such that $\config[\pid]=(\state(\config[\pid]),\val(\config[\pid]))$.
Abusing notation we extend $\state$ to the whole configuration \ie{}
	$\state(\config)[i] = \state(\config[\pid])$ for $i \in \{ 1, \dots, N \}$.\bd{C'est pas un vecteur ? Vraiment un ensemble ?}\ea{fixed?}

Note that the representation of configuration as vectors is only for practical reasons, the processes are identical and do not have ids.

We say that a configuration $\config$ is initial if all processes are in  the initial state and their clocks are all set to $0$ \ie{} for all $\pid$, $\config[\pid]=(\locinit,\clockvalzero)$.

Given a timing parameter valuation\LongVersion{ $\pval$}, the transition relation on configurations is intuitively defined as follows:
First a delay is chosen and all the clocks in the network are increased by this delay.
Then one of the processes performs a possible action \ie{} an action for which the guard is satisfied given its clock valuation and the valuation of the timing parameter. Two cases follow. Either the action is internal and only this process moves and updates its clocks accordingly, or the action is a broadcast and a set of receivers is chosen.
It this latter case, the sender moves and updates its clocks and all the chosen receivers also move and update their clocks accordingly.

More formally, given a timing parameter valuation $\pval$ and a configuration $\config\in \vect_N(\Loc\times\ClockVals)$, there are transitions for all $t\in\grandrplus$, $\pid\in\{1,\dots,N\}$, $\edge=(\loc_1, \guard, \action, \update, \loc_2)\in\Edges$, and $R\subseteq \{1,..,N\}$ such that:
\begin{description}
\item[elapse of time] there is a valuation $\config_t\in \vect_N(\Loc\times\ClockVals)$ such that $\forall j\in\{1,\dots,N\}$, $\config_t[j]=(\loc,\clockval+t)$ where $(\loc,\clockval)=\config[j]$, and
\item[execution of the action] the following conditions are satisfied:
\begin{description}
 \item[the action is enabled] $\state(\config_t[\pid])=\loc_1$ and $\val(\config_t[\pid])\models_{\pval} \guard$, and
 \item[execution of the action] the transition leads to a configuration $\config'$ such that
 	\begin{itemize}
 	\item the active process performed the action: $\config'[\pid]=(\loc_2,\update(\val(\config_t[\pid])))$,
 	\item unconcerned processes are unaffected: $\forall j\in\{1,\dots,N\}\setminus(R\cup\{\pid\})$,  $\config'[j]=\config_t[j]$, and
 	\item either 
 	\begin{itemize}
 		\item $a$ is an internal action ($a=\epsilon$) and the receiving processes are unaffected: $\forall j\in R\setminus\{\pid\}$,  $\config'[j]=\config_t[j]$; or 
 		\item $a=!!m$ and $\forall j\in R\setminus\{\pid\}$, if there exists an edge $(\state(\config_t[j]),\guard',\linebreak[3]??m,\update',q')$ such that $\val(\config_t[j])\models_{\pval} \guard'$, then the process receives the message and $\config'[j]=(q',\update'(\val(\config_t[j])))$.  Otherwise the process is unaffected and $\config'[j]=\config_t[j]$.
 		\end{itemize}
\end{itemize} 	 
 \end{description}
\end{description}
When such a transition exists, it is written $\config\xrightarrow{t,\pid,\edge,R}_{\pval} \config'$ or simply $\config\to_{\pval} \config'$.

Notice that we consider non blocking broadcast \ie{} if a process is in the receiver set but has no available reception edge, the process is unaffected and the network behaves as if this process was not in the receiver set.

An execution $\rho$ is a sequence of transitions starting in an initial configuration $\config_0$, $\rho=\config_0\to_{\pval} \config_1\to_{\pval} \cdots$.
An execution is maximal if it is infinite or if it ends in a configuration from which there is no possible transition.

Notice that once an initial configuration is fixed, the number of processes does not change along an execution. However the semantics is infinite for several reasons: first there is an infinite number of initial configurations (\ie{} of network sizes); second, there is also an infinite number of possible parameter valuations; third, given a network size and parameter valuation, clock valuations assign real values to clocks and are thus uncountable.

Given  PTBP $\PTBP$, a network size $N$ and a timing parameter valuation $\pval$, we denote by $\Execs(\PTBP,N, \pval)$ the set of all maximal executions for the valuation~$\pval$ with $N$~processes.

We say that a maximal execution $\rho=\config_0\to_{\pval}\config_1\to_{\pval} \cdots$ reaches a state~$\loc$, written $\rho\models \Diamond \loc$, if there exists an index $n$ such that $\loc \in \state(\config_{n})$. 

\begin{example}\label{ex:run}
We give an example of  a possible execution for a network composed of $4$~processes running the protocol given in \cref{ex:PTBP}. In this example $tl=9$ and $pt=3$. The edge used during a transition is here only represented by the associated action for readability.
\begin{scriptsize}
\begin{align*}
\scriptscriptstyle
\left( \begin{array}{c}
q_0,0 \\
q_0,0 \\
q_0,0 \\
q_0,0 \\
q_0,0 
\end{array} \right)
\xrightarrow{0.1,1,f,\emptyset} 
\left( \begin{array}{c}
\factory,0 \\
q_0,0.1 \\
q_0,0.1 \\
q_0,0.1 \\
q_0,0.1 
\end{array} \right)
\xrightarrow{4.1,2,f,\{3,5\}} 
\left( \begin{array}{c}
\factory,4.1 \\
\factory,0 \\
\client,4.2 \\
q_0,4.2 \\
\client,4.2 
\end{array} \right)
\xrightarrow{1.3,1,\produ,\{5\}} 
\left( \begin{array}{c}
\factory,0 \\
\factory,1.3 \\
\client,5.5 \\
q_0,5.5 \\
1,5.5 
\end{array} \right)
\xrightarrow{1.8,2,\produ,\{1,3,4,5\}}
\\
\xrightarrow{1.8,2,\produ,\{1,3,4,5\}}
\left( \begin{array}{c}
\factory,1.8 \\
\factory,0 \\
1,7.3 \\
q_0,7.3 \\
2,7.3 
\end{array} \right)
\xrightarrow{1.2,1,\produ,\{5\}} 
\left( \begin{array}{c}
\factory,0 \\
\factory,1.2 \\
1,8.5 \\
q_0,8.5 \\
3,8.5 
\end{array} \right)
\xrightarrow{0,5,\epsilon,\emptyset} 
\left( \begin{array}{c}
\factory,0 \\
\factory,1.2 \\
1,8.5 \\
q_0,8.5 \\
\goal,8.5 
\end{array} \right)
\end{align*}

\end{scriptsize}

\end{example}

\begin{remark}
Notice that even if the notations are slightly different, PTBP networks fully extend both PTA~\cite{AHV93} and timed broadcast protocols~\cite{ADRST11}. Indeed,  PTA are PTBP networks of size one and timed networks are PTBP networks without timing parameters.
\end{remark}

\LongVersion{
\paragraph*{Problems considered}
}
In this paper, we consider parameterized  reachability
problems: we ask whether there exists a network size $N$ satisfying a given reachability property.
We consider existential (EF) and universal (AF) reachability properties that ask, given goal state $\locfinal$, whether this state is reached by some (EF) or all (AF) executions.
Moreover we also consider variants on the quantifier on timing parameters and ask whether the property holds for all parameter valuations (universality) or for at least one (existence).

Thus, given a bounded PTBP $(\PTBP,\bounds)$ and a state $\locfinal$ 
we consider the following problems:
\begin{description}
	\item[$\exists$-EF-existence]~~$\exists N\in\grandn, \exists \pval\in \bounds,\exists\rho\in\Execs(\PTBP,N,\pval), \rho\models \Diamond \locfinal$
	\item[$\exists$-EF-universality] $\exists N\in\grandn,\forall \pval\in \bounds,\exists\rho\in\Execs(\PTBP,N,\pval), \rho\models \Diamond \locfinal$
	\item[$\exists$-AF-existence]~$\exists N\in\grandn, \exists \pval\in \bounds,\forall\rho\in\Execs(\PTBP,N,\pval), \rho\models \Diamond \locfinal$
	\item[$\exists$-AF-universality] $\exists N\in\grandn,\forall \pval\in \bounds,\forall\rho\in\Execs(\PTBP,N,\pval), \rho\models \Diamond \locfinal$
\end{description}
Note that, in contrast to PTAs, where the emptiness problem (the emptiness of the valuation set for which a property holds) is equivalent to the existence problem; this is not the case in our setting, because of the additional quantifier on the network size (``$\exists N\in\grandn$'').

For convenience, we will omit the bounds when they are irrelevant and consider these problems in the case of general PTBP. In the following, the bounds will only be relevant in \cref{section:LU}.

In the next section we investigate these problems in the general semantics defined above.
This semantics is called \emph{reconfigurable} since the communication topology (modeled by the reception sets) can be reconfigured at each step.
However, in broadcast protocol networks with a parametric number of processes, the communication topology plays a decisive role on decidability status.
We will thus investigate another communication setting, in \cref{section:clique}, in which every message is received by all the other processes \ie{} the reception set $R$ is always equal to $\{1,\dots,N\}$.
These networks are called \emph{clique} networks.

\begin{example}
Considering the PTBP given in \cref{ex:PTBP} and the target state $\goal$.
The execution presented in \cref{ex:run} shows that the answer for the $\exists$-EF-existence problem is positive whenever the bounds allow for $tl=9$ and $pt=3$ in the reconfigurable semantics.
Notice that in the clique semantics, it is not possible to reach $\goal$ unless $pt*3<tl$. Indeed in the clique semantics when a first process moves to $\factory$, all the other processes receive the message $f$ and thus move to $\client$. Thus, at least three $pt$ time units are necessary in order to receive 3 messages $\produ$.

Notice also that in this example, in both semantics, both $\exists$-AF problems would give negative answers since there is always an execution that forever sends $p$ in the bottom self-loop and never uses the internal transition leading to $\goal$. Thus such an execution never reaches $\goal$.
\end{example}

\section{Reconfigurable semantics}\label{section:reconfigurable}

\ea{c'est bizarre de commencer par AF puis de faire EF (même si je comprends qu'il vaut mieux garder le meilleur pour la fin)}

\ea{je préciserais dans tous les intitulés ``for/in the reconfigurable semantics'', puisqu'on le fait aussi en \cref{section:clique}}

\subsection{AF problems in the reconfigurable semantics}

The reconfigurable semantics of broadcast networks, where the set of receivers can be chosen non-deterministically, makes the AF problems equivalent to the same problems in networks of size~1.
This is due to the fact that in the reconfigurable semantics nothing prevents messages to be sent to an empty set of receivers.
The following theorem is a direct consequence of previous known results on parameterized timed automata and this previous remark\bd{ (the detailed proof can be found in \cref{proof:thm:reconf-AF})}.\ea{puisque c'est direct (et je suis d'accord que ça l'est), peut-être que de la mettre en annexe serait une bonne idée, si on manque de place}\footnote{%
	The proof of the results that can be obtained using existing techniques in a more or less straightforward manner can be found in the appendix.
}

\ea{le théorème suivant est vrai en borné ou pas pour 1 horloge ; pour 3 horloges, c'est moins sûr, il faut revérifier la littérature (on a peut-être besoin de plus que 3 horloges ?)}

\newcommand{\thmReconfAF}{$\exists$-AF-existence and $\exists$-AF-universality are \dec{decidable} for 1 clock PTBP but \undec{undecidable} for (L/U)-PTBP with 3 clocks or more.}
\begin{theorem}\label{thm:reconf-AF}
	\thmReconfAF{}
\end{theorem}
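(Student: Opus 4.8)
The plan is to reduce both $\exists$-AF problems, in the reconfigurable semantics, to the corresponding unavoidability problems on the underlying parametric timed automaton --- the size-$1$ instance of the network --- and then to invoke the known (un)decidability results for PTA and L/U-PTA quoted earlier. The guiding observation, already noted in the excerpt, is that a broadcast may always be fired with an empty receiver set, so no process is ever forced to move by another one.

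First I would prove the following one-sided correspondence, for a fixed parameter valuation $\pval$ and any $N \ge 1$: if every maximal execution of the $N$-process network under $\pval$ reaches $\locfinal$, then every maximal run of the associated PTA --- obtained by keeping only the internal and broadcast edges of $\PTBP$, the latter viewed as internal since a single process never receives its own message --- under $\pval$ reaches $\locfinal$. I would argue by contraposition: given a maximal PTA run $r$ that avoids $\locfinal$, one builds a maximal $N$-process execution that avoids $\locfinal$ by having all $N$ processes replay $r$ in lockstep. Each step of $r$ (a delay followed by an edge) is simulated by letting process $1$ perform that delay and edge, broadcasting to $R=\emptyset$ when the edge is a broadcast, and then letting processes $2,\dots,N$ each take the same edge with null delay: they are all in the same local configuration, having accumulated process $1$'s delay. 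All processes stay synchronised along $r$, hence none visits $\locfinal$. If $r$ is infinite the lifted execution is infinite, hence maximal; if $r$ is finite and ends in a PTA-deadlock, then in the lifted configuration no process can initiate an action, and since nobody broadcasts no reception can occur either, so the configuration is a genuine network deadlock and the lifted execution is maximal. The point to check with care is precisely this last one: reception edges must create no spurious continuation once every process is stuck.

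The converse direction is immediate, since with $N=1$ every receiver set is empty and the $1$-process network is literally the PTA. Putting the two together, and using that the outer quantifier ``$\exists N$'' can always be witnessed by $N=1$, for every clock count $k$ one obtains: $\exists$-AF-existence for $k$-clock PTBP holds iff there is a parameter valuation making $\locfinal$ unavoidable in the associated $k$-clock PTA (non-emptiness of the AF-valuation set), and $\exists$-AF-universality holds iff $\locfinal$ is unavoidable in that PTA for every parameter valuation (universality of the AF-valuation set). Moreover the reduction sends an L/U-PTBP to an L/U-PTA with the same parameter partition and the same number of clocks.

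It then remains to quote the PTA results. For $k=1$, emptiness and universality of the set of valuations for which a location is unavoidable are decidable for $1$-clock PTA --- with a single clock one may enumerate the possible orderings of the parameters and reduce to a finite symbolic analysis (see \cite{Andre18STTT} and the references therein) --- which gives decidability of $\exists$-AF-existence and $\exists$-AF-universality for $1$-clock PTBP. For $k \ge 3$, these problems are undecidable already for L/U-PTA: \cite{ALime17} encodes the halting problem of a two-counter machine while respecting the lower-bound/upper-bound partition of the parameters, and composing this with the reduction above yields undecidability of $\exists$-AF-existence and $\exists$-AF-universality for (L/U-)PTBP with $3$ clocks or more. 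The obstacle I anticipate is not the equivalence, which is elementary once the lockstep construction and its maximality are spelled out, but checking that the invoked PTA statements match exactly the clock (and parameter) budgets claimed in the theorem.
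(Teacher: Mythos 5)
Your proof is correct and follows essentially the same route as the paper's: both reduce the $\exists$-AF problems to the size-one network (i.e., the underlying PTA) via the observation that broadcasts may always target an empty receiver set, and then invoke known PTA results; your lockstep lifting merely spells out the maximality check that the paper's ``executions without communication'' argument leaves implicit. The only discrepancy is bibliographic: the paper derives the $3$-clock (L/U-)PTA undecidability and the $1$-clock decidability from \cite{JLR15} (combined with the finite parametric zone abstraction of \cite{AM15}), rather than from \cite{ALime17}.
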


\subsection{EF problems in the reconfigurable semantics}

We start by recalling some known results on networks composed of an arbitrary number of timed processes.
In~\cite{AJ03} the authors considered timed networks and proved that the reachability problem ($\exists$-EF) is decidable \LongVersion{when considering network of processes }with one clock per process and undecidable for two clocks per process~\cite{ADM04}.
Note that timed networks have a different semantics than the one we use in this paper since they use rules and not broadcasts. However the reconfigurable semantics can be easily encoded in the rules of timed networks.
This gives us the decidability of the $\exists$-EF problem (without timing parameters and with one clock per process).

\begin{theorem}[\cite{AJ03,ADM04}]\label{thm:reconf-EFTA}
	$\exists$-EF is \dec{decidable} for PTBP without parameters and with one clock per process and \undec{undecidable} with two clocks per process.
\end{theorem}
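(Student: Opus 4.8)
The plan is to prove both halves by inter-reducing, with respect to coverability of a target location, parameter-free PTBP under the reconfigurable semantics and the timed networks of \cite{AJ03,ADM04}, in a way that preserves the number of clocks per process; decidability for one clock then follows from \cite{AJ03} and undecidability for two clocks from \cite{ADM04}. The only non-cosmetic point is the faithfulness of the encoding of reconfigurable broadcasts by the rules of timed networks, already advertised in the remark preceding the statement.

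For the decidable direction, I would fix a parameter-free $\PTBP=(\Loc,\Clock,\Alphabet,\emptyset,\locinit,\Edges)$ with $|\Clock|=1$ and a target $\locfinal$, and build a timed network over the same locations and the same single clock: each internal edge $(\loc_1,\guard,\epsilon,\update,\loc_2)$ becomes a local rule that moves one process, checking $\guard$ and applying $\update$; each broadcast edge $(\loc_1,\guard,!!m,\update,\loc_2)$ becomes a broadcast rule in which the sender moves from $\loc_1$ to $\loc_2$ (checking $\guard$, applying $\update$) while every other process currently in a location carrying an enabled (\ie{} guard-satisfied) $??m$-reception edge \emph{independently and non-deterministically} either fires one such edge or stays put, and processes with no enabled $??m$-edge are untouched. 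Choosing which of the enabled receivers actually move realizes precisely the non-deterministic, non-blocking receiver set $R$ of the PTBP semantics, so for every network size $N$ the two systems reach the same sets of locations; hence $\exists$-EF for $\PTBP$ reduces to coverability of $\locfinal$ in this timed network over all sizes, \dec{decidable} for one clock per process by \cite{AJ03}.

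For the two-clock case I would go the other way, checking that the rule format used in the undecidability construction of \cite{ADM04} is simulable by reconfigurable broadcasts that keep two clocks per process: a binary synchronization (moving two processes $\pid$ and $j$ together) is mimicked by a broadcast with singleton receiver set $R=\{j\}$, and a rule acting uniformly on all processes in a given state by a broadcast to $R=\{1,\dots,N\}$, the guards and updates being copied verbatim. The simulation of a two-counter machine in \cite{ADM04} therefore transfers, so the resulting $2$-clock parameter-free PTBP reaches a designated ``halt'' location iff the machine halts, giving \undec{undecidability}. The main obstacle is the semantic match in the first reduction: one must verify that the reconfigurable broadcast --- arbitrarily many simultaneous receivers, silent skipping of non-participants, absence of blocking --- is faithfully captured by the rules of \cite{AJ03} (with the sender playing the role of the network's distinguished process and the non-blocking convention built into the rule), and that no auxiliary clock is introduced by the translation; once this is granted, the two cited results yield the statement.
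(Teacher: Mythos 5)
The paper does not actually prove this statement: it is recalled from \cite{AJ03,ADM04}, and the only justification offered in the text is the one-sentence remark that the reconfigurable semantics ``can be easily encoded in the rules of timed networks''. Your plan --- an inter-reduction with timed networks that preserves the number of clocks per process --- is therefore exactly the route the paper intends, just spelled out.

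The one place where your sketch would not type-check against the cited model is the ``broadcast rule'' you invoke in the decidable direction: the rules of \cite{AJ03,ADM04} are fixed-arity rendezvous steps (a controller move together with a finite multiset of process roles), so there is no rule letting ``every other process with an enabled reception edge'' move simultaneously, and decomposing a broadcast into a lazy sequence of binary steps is unsound because receivers whose clocks advance in between could satisfy guards they did not satisfy at sending time. The missing ingredient is a copycat argument: since $N$ is existentially quantified and processes are identical, any timed run can be padded with duplicates that follow the same timed trajectory (they can always be added to or removed from receiver sets), so for reachability one may assume every broadcast has a receiver set of size at most one; each broadcast edge then becomes an arity-2 rule (sender plus one receiver) plus an arity-1 rule for the empty receiver set, and the reduction goes through with one clock per process. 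Symmetrically, in the undecidable direction a rendezvous rule is \emph{blocking} while PTBP broadcasts are non-blocking, so mimicking a rule by a broadcast to a singleton needs an acknowledgement round (or one redoes the two-counter-machine encoding of \cite{ADM04} directly in PTBP syntax); ``copying the guards verbatim'' glosses over this. Neither point is fatal, but both are exactly the ``semantic match'' your proposal defers, and they are where the actual work lies.
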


A direct consequence of this theorem is the undecidability of the $\exists$-EF problems for PTBP with two clocks.

\begin{lemma}\label{lemma:EF:undec:2}
	The $\exists$-EF-existence and $\exists$-EF-universality problems are \undec{undecidable} for PTBP with two clocks.
\end{lemma}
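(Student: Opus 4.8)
The statement to prove is \cref{lemma:EF:undec:2}: $\exists$-EF-existence and $\exists$-EF-universality are undecidable for PTBP with two clocks. The natural approach is a reduction from the non-parametric problem in \cref{thm:reconf-EFTA}, namely $\exists$-EF for parameter-free timed networks with two clocks per process, which is undecidable by \cite{ADM04}. The key observation is that a PTBP with no parameters at all is in particular a PTBP, and for such a protocol the set of parameter valuations is trivial (there are no parameters, so the single empty valuation $\pval$ is the only one, and ``$\exists\pval$'' and ``$\forall\pval$'' coincide). Hence $\exists$-EF-existence, $\exists$-EF-universality, and plain $\exists$-EF all collapse to the same predicate $\exists N\in\grandn,\ \exists\rho\in\Execs(\PTBP,N,\pval),\ \rho\models\Diamond\locfinal$ on a parameter-free 2-clock PTBP.

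First I would make precise the (already-remarked) correspondence between timed networks in the sense of \cite{AJ03,ADM04} and PTBP networks under the reconfigurable semantics: a rule of a timed network — which synchronizes a bounded number of processes — can be simulated by broadcasts to a non-deterministically chosen receiver set, and conversely a broadcast to an arbitrary receiver set is simulated by repeated pairwise rules; the paper has already asserted this encoding ("the reconfigurable semantics can be easily encoded in the rules of timed networks"), so I would simply invoke it. This gives: there is a parameter-free 2-clock PTBP $\PTBP$ and a state $\locfinal$ such that deciding $\exists N\,\exists\rho\in\Execs(\PTBP,N,\pval)\ \rho\models\Diamond\locfinal$ is undecidable, where $\pval$ denotes the empty parameter valuation.

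Next I would spell out the trivial step: since $\Param=\emptyset$, for every $N$ the family $\{\Execs(\PTBP,N,\pval)\}$ is a singleton indexed by the unique $\pval\in\bounds$ (for any choice of $\bounds$, including the vacuous one), so
\[
\exists N\,\exists\pval\,\exists\rho\ \rho\models\Diamond\locfinal
\quad\Longleftrightarrow\quad
\exists N\,\exists\rho\ \rho\models\Diamond\locfinal
\quad\Longleftrightarrow\quad
\exists N\,\forall\pval\,\exists\rho\ \rho\models\Diamond\locfinal .
\]
Therefore a decision procedure for $\exists$-EF-existence (resp.\ $\exists$-EF-universality) restricted to 2-clock PTBP would decide the undecidable problem of \cref{thm:reconf-EFTA}, a contradiction. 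This establishes \cref{lemma:EF:undec:2}.

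**Main obstacle.** There is essentially no hard step: the lemma is a routine corollary of \cref{thm:reconf-EFTA} together with the absence of parameters. The only point needing care is the encoding between timed-network rules and reconfigurable broadcasts — checking that a receiver with no matching $??m$ edge being left untouched (the non-blocking convention) does not interfere with the simulation, and that the bounded arity of timed-network rules is faithfully reproduced by choosing finite receiver sets. Both are straightforward given how the reconfigurable semantics is set up, so the bulk of the argument is just making the quantifier collapse explicit.
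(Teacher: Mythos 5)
Your proposal is correct and matches the paper's argument: the paper derives \cref{lemma:EF:undec:2} as a direct consequence of \cref{thm:reconf-EFTA}, exactly as you do, by observing that a parameter-free two-clock PTBP is a special case of a two-clock PTBP and that the existential and universal quantifiers over parameter valuations collapse when there are no parameters. Your explicit spelling-out of the quantifier collapse and of the timed-network/reconfigurable-broadcast correspondence is just a more detailed rendering of the same one-line reduction.
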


Moreover, we show below that the undecidability even holds for PTBP with a single clock.
This is a major difference with both parameterized networks and PTA, where the restriction to one clock leads to decidability~\cite{AHV93}.
Also observe that our result does not rely on the reconfigurable semantics particularly.

\ea{notes pour moi-même :
\begin{enumerate}
	\item je n'ai pas relu la preuve ;
	\item ça contredit (sauf erreur) notre discussion de mai 2017 : `` PTA à 1 seule horloge par composant : décidable. Paulin dit que, si la sémantique d'un composant est finie, alors on peut raisonner sur le graphe des zones et, modulo une combinatoire (actions ? autorisées), on termine. Or le graphe des zones des zones paramétrées d'un PTA à 1 horloge est fini ; on aurait donc non seulement le vide (y a-t-il au moins une valeur des paramètres…?) mais peut-être aussi la synthèse (pour quelles valeurs…?) Trivialement transposable aux L/U, L- et U- à 1 horloge.''
\end{enumerate}

}
\pf{oui j'avai dit des bétises. meme si le graph des regiion est fini pour un composant, il n'est pas du tout fini pour un nombre arbitraire de composant vu qu'il y a un nobre arbitraire d'horloge et qu'il faut considerer leur ordre. Je pensai que c'étai pas trop grave mais en fait si.}

\ea{le théorème suivant est uniquement vrai en \emph{non-borné} ; on pourrait imaginer l'adapter en borné en reprenant la preuve de \cite{ALR16ICFEM} mais je n'ai aucune idée si ça marche effectivement (et je crois que c'est de la borne ouverte de mémoire ? même si on doit pouvoir adapter en fermé avec un gadget initial, à voir en distribué ???)}

\newcommand{\thmReconfPTBP}{The $\exists$-EF-existence and $\exists$-EF-universality problems are \undec{undecidable} for PTBP with one clock.}
\begin{theorem}\label{thm:reconf-PTBP}
	\thmReconfPTBP{}
\end{theorem}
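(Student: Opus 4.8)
The plan is to reduce from the halting problem of a deterministic two-counter (Minsky) machine~$\TWOCM$, which is undecidable, and to do so with a single construction that settles both $\exists$-EF-existence and $\exists$-EF-universality simultaneously. The guiding intuition is that an arbitrary number of one-clock processes kept in lockstep through broadcasts is not really limited by the one-clock bound: the process population can store the two counter values, while a designated process simulates the finite control of~$\TWOCM$, and a single timing parameter is then enough to land in an undecidable fragment. This is coherent with the fact that one clock alone is decidable for parameterized networks and for PTA, and with \cref{thm:reconf-EFTA} stating that parameterless one-clock PTBP are decidable, so the parameter must carry real weight; note also that the construction will use the clique-style behaviour of a single broadcast and a choice of full receiver set, so it does not depend on reconfigurability.

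\textbf{Construction.} One process, elected by a first mover whose election edge is itself a broadcast turning every other process into an idle \emph{worker}, acts as the \emph{controller} and walks through the instruction graph of~$\TWOCM$ (in the clique semantics this election automatically forbids a second controller; in the reconfigurable semantics we simply take the full receiver set for that broadcast). The value of counter~$C_j$ is the number of workers currently in a location~$\ell^{\mathrm{on}}_j$. The controller simulates an increment (resp.\ decrement) of~$C_j$ by a broadcast that moves one idle worker into~$\ell^{\mathrm{on}}_j$ (resp.\ one worker out of it). Each worker resets its single clock on every move, so distinct workers carry pairwise-distinct clock values whose relative order is preserved by time elapse; the reception edges of the workers are guarded by comparisons of the clock against the timing parameter, and these guards serve two purposes: (i) along a suitably scheduled run they single out the unique worker meant to react to an increment/decrement broadcast, and (ii) they implement the zero-test — the construction is arranged so that a worker lingering in~$\ell^{\mathrm{on}}_j$ past a parameter-dependent deadline has an enabled edge emitting an \emph{alarm} that sends the controller to a non-accepting sink, and so that this alarm cannot be avoided on any run in which a zero-test of~$C_j$ is then performed while the counter is non-empty. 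If~$\TWOCM$ halts we take $N$ to be one plus the maximal counter value along the halting run, choose the parameter and the delays so that each broadcast is received by exactly the intended worker and no alarm is ever raised, and follow the halting computation, ending in~$\locfinal$; conversely, the guards are set up so that every maximal run reaching~$\locfinal$ must have faithfully simulated the unique (deterministic) computation of~$\TWOCM$ up to halting. This yields $\exists$-EF-existence, with $N$ as witness.

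\textbf{Universality and the main obstacle.} For $\exists$-EF-universality we additionally need~$\locfinal$ reachable for \emph{every} parameter valuation, i.e.\ the simulation must be faithful uniformly in the parameter: we arrange this by letting the parameter act only as a common time scale, so that a halting run can be replayed for any parameter value (a small initial gadget absorbs the degenerate valuation), while a non-halting~$\TWOCM$ leaves~$\locfinal$ unreachable for all values. The principal difficulty is the faithful zero-test: broadcasts here are non-blocking, so the controller never gets an acknowledgement, and the model has no location invariants that force a worker to leave~$\ell^{\mathrm{on}}_j$; one must therefore design the clock/parameter guards so that the very presence of a worker in~$\ell^{\mathrm{on}}_j$ during a zero-test blocks every run that does not emit the alarm, while simultaneously guaranteeing that on the intended run — in particular when the tested counter genuinely is~$0$ — no alarm is ever emitted, and that all of this holds for every parameter valuation (for the universality part). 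Reconciling these timing constraints with only one clock per process is the heart of the proof; the remaining bookkeeping (controller election, surplus processes, harmlessness of over-counting, and robustness to reconfigurable receiver sets) is routine.
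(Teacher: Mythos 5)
There is a genuine gap, and it sits exactly where you locate "the heart of the proof": the zero-test under your population-count encoding. In this model there are no location invariants and no mechanism to force a process to act: a worker sitting in $\ell^{\mathrm{on}}_j$ with an enabled alarm edge can simply never fire it, and in the reconfigurable semantics even a fired alarm may be broadcast to an empty receiver set. So the claim that "this alarm cannot be avoided on any run" is precisely the step that fails; on a cheating run the controller treats a non-empty counter as empty, the simulation is unfaithful, and $\locfinal$ can be reached independently of whether $\TWOCM$ halts. This is the well-known obstacle for broadcast protocols that the paper itself discusses (see the proofs of \cref{thm:clique-AFA} and \cref{thm:clique-EFULUPTA}: one cannot distinguish "no process encodes the counter" from "the processes do not answer"). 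The paper's proof of \cref{thm:reconf-PTBP} avoids it by \emph{not} counting processes: a counter value is encoded as the \emph{difference} between the clock of the counter processes and the clock of the controller, every edge carries the implicit guard $\clock\le\param$, and the controller is made to \emph{wait} for a message $c_i$ from the counter whose arrival time ($\clock=\param$ versus $\clock<\param$) reveals whether that difference is zero; any process that fails to answer, or answers at the wrong moment, lets its clock exceed $\pval(\param)$ and is dead for the rest of the execution. With your encoding a genuinely zero counter has no process able to acknowledge anything, so the controller cannot be made to wait, and a non-zero counter has processes that can silently refuse to raise the alarm; I do not see how the guards you postulate can be realized.

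The universality part has a second gap. Reducing $\exists$-EF-universality from halting by "letting the parameter act only as a common time scale" cannot work as stated: if the parameter is a pure scale it carries no information and the construction degenerates to a parameterless one-clock PTBP, for which reachability is decidable (\cref{thm:reconf-EFTA}); if it is not, you must establish reachability of $\locfinal$ for \emph{every} valuation, including arbitrarily small ones, and in any encoding where $\param$ bounds the representable counter values a halting run is reproducible only for sufficiently large $\pval(\param)$. This is why the paper reduces $\exists$-EF-universality from the \emph{boundedness} problem instead: the error state is reachable exactly when an increment is attempted at counter value $\pval(\param)-1$, hence reachable for all integer valuations iff the run of $\TWOCM$ is unbounded, with a gadget routing all non-integer valuations to the error state. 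Your single construction settling both problems at once would need to be replaced by something of that shape.
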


\begin{proof}
The proof is by reduction of the halting and boundedness (respectively) problems for two-counter machines\LongVersion{ (recalled in \cref{appendix:2CM})}.

First, in this proof we will assume that the parameter $\param$ only takes integer values. This is not a restriction since we can add a gadget at the beginning of the PTBP to check such property. This gadget is an adaptation of similar gadgets from the PTA community to the case of PTBP, and is given in \LongVersion{\cref{fig:gadget} in \cref{appendix:proof:thm:reconf-PTBP-details:integer}}\ShortVersion{\cite{ADFL18}}.

Given a two-counter machine, we define a protocol $\mathcal{P}$ separated in three parts, the controller part (in charge of tracking the current instruction), the counters part (to model the counters behaviors) and an idle part that allows to use additional processes when needed.

The value of the counters is encoded (up to the value of parameter $\param$ minus~$1$ here for technical reasons) by the difference between the clock value of the processes in states representing counters and the clock value of the processes in the controller part.

Formally, $\mathcal{P}$ is defined as follows:
\begin{itemize}
	\item $\Loc=\{\locinit,error,c_i,nc1_i,nc2_i,zt1_i^j,zt2_i^j,dec1_i^j,dec2_i^j,inc1_i^j,inc2_i^j,inc3_i^j,idle \mid j\in\{1,2\},i\in\{1,2\}\}\cup \{k^j\mid \TwoCMinstruction \in \TwoCMInstructions,j\in\{1,2,3,4\}\}$
	\item $\Alphabet = \{tick,inc_i,dec_i,zt_i,c_i,oc_i,nc_i\mid i\in \{1,2\}\}$
	\item $\Param=\{\param \}$
	\item $\Clock=\{\clock \}$
	\item $\Edges$ is defined as described below.
\end{itemize}
Let us describe $\Edges$:
	On every transition, there is a guard $\clock \leq \param$ which is omitted to clarify notations; similarly, when a guard is true\LongVersion{ (here limited to $\clock \leq \param$)} or when there is no reset, we omit them in the transition.
The construction is represented in \cref{fig:const}. $\Edges$ is composed of the following transitions:
\begin{description}
	\item[Initialization.] $(\locinit,\clock=0,\epsilon,k_0^1)$, for $i\in\{1,2\}$, $(\locinit,\clock=0,\epsilon,c_i)$, $(\locinit,\clock=0,\epsilon,idle)$\\
	The processes can chose non-deterministically to either move to the controller part, the counters part, or the idle part (\cref{figure:construction:initialization}).
	\item[Decrement of counter $i$.] For a decrement instruction $\TwoCMinstruction:decr~\TwoCMCounter_i~goto~\TwoCMinstruction_1$, we define the following transitions in $\Edges$ (depicted in \cref{figure:construction:decrement}):
		\begin{itemize}
		\item For the controller: $(k^1,\clock=1,!!dec_i,k^2)$ $(k^2,\clock=\param,!!tick,\{\clock:=0\},k_1^1)$\\
		The controller ``announces'' that the instruction is a decrement (using $!!dec_1$) when its clock is equal to $1$ (guard $\clock = 1$) and then announces when its clock reaches the value of the parameter (guard $\clock = \param$).
		\item For the counter involved ($i$): $(c_i,\clock> 1,??dec_i,dec1_i^i)$,  $(dec1_i^i,\clock=\param,\epsilon,\{\clock:=0\},dec2_i^i)$ $(dec2_i^i,\clock=1,\epsilon,\{\clock:=0\},dec3_i^i)$ $(dec3_i^i,??tick,c_i)$\\
		When the processes representing the counter~$i$ receive the message corresponding to the decrement, they move to an intermediary state, then reset their clock when it reaches $\param$ and reset it another time when the clock reaches~$1$.
		This way, the difference with the controller clock has decreased by one.
		Notice that, if $x=1$ when they receive the decrement message (meaning that the counter has value~0), they cannot take the transition.
		\item For the counter not involved ($3-i$): $(c_j,??dec_i,dec1_i^j)$ $(dec1_i^j,\clock=\param,\{\clock:=0\},decj2_i^j)$ $(dec2_i^j,??tick,c_j)$.\\
		The processes encoding the counter not involved just reset their clock when it reaches~$\param$, thus the difference remains constant.
	\end{itemize}
	\item[Increment of counter $i$.] for an increment instruction $\TwoCMinstruction:incr~\TwoCMCounter_i~goto~\TwoCMinstruction_1$,
		the construction is almost symmetric to decrement, but involves an additional technicality---and therefore we give it below.
	We define the following transitions in $\Edges$ (depicted in \cref{figure:construction:increment}):
	\begin{itemize}
		\item For the controller: $(k^1,\clock=1,!!inc_i,k^2)$ $(k^2,\clock=\param,!!tick,\{\clock:=0\},k_1^1)$\\
		The controller announces that the instruction is an increment when its clock is equal to $1$ and then announces when its clock reaches the value of the parameter.
		\item For the counter involved: \\
		The clock value should be reset at $\param-1$, but such a guard is not allowed and is not possible to encode with just one clock.
			As an additional technicality, we thus rely on a \emph{non-deterministic guess}, that is the checked by a new process.
			This is done as follows:
		\begin{description}
			\item[For the current counter processes] $(c_i, \clock<\param,??inc_i,inc1_i^i)$,$(c_i, \clock=\param,??inc_i,error)$, $(inc1_i^i,!!nc_i,inc2_i^i)$ $(inc2_i^i,\clock=\param,!!oc_i,idle)$.\\
			The processes encoding the counter receive the increment message and then guess non-deterministically that their clock value is $\param-1$ and send a message $nc_i$.
			In order to check that the guess was right, they then announce when their clock reaches $\param$ by sending message $oc_i$, and the processes move to~$idle$.
			The value of the counter will then be encoded by the new processes.
			Notice that if the clock value is already equal to~$\param$, then we reached the maximal possible value, and the processes move to the error state~$error$.
			\item[For the new counter process] $(idle,??nc_i,\{\clock:=0\},nc1_i)$ $(nc_i,\clock=1,??oc_i,nc2_i)$ $(nc2_i,??tick,c_i)$.\\
			To check that the guess was right, we use the idle processes that when receiving the message $nc_i$ reset their clock.
				They are then allowed to encode the counter if they receive the confirmation $oc_i$ when their clock is equal to~$1$ (thus the guess was correct).

		\end{description}
		\item For the counter not involved: $(c_j,??inc_i,inc1_i^j)$ $(inc1_i^j,\clock=\param,\{\clock:=0\},incj2_i^j)$ $(inc2_i^j,??tick,c_j)$.\\
			The processes encoding the counter not involved just reset their clock when it reaches~$\param$.
	\end{itemize}
	\item[Zero-test.] For a zero-test instruction $\TwoCMinstruction:if~\TwoCMCounter_i=0~then~goto~\TwoCMinstruction_1~else~goto~\TwoCMinstruction_2$, we define the following transitions:
	\begin{itemize}
		\item For the controller $(k^1,\clock=1,!!zt_i,k^2)$ $(k^2,\clock=\param,??c_i,k^3)$ $(k^2,\clock<\param,\linebreak[3]??c_i,k^4)$ $(k^3,\clock=\param,!!tick,\{\clock:=0\},k_1^1)$ $(k^4,\clock=\param,!!tick,\{\clock:=0\},k_2^1)$.\\
		The controller announces that the instruction is a zero-test when its clock is equal to~$1$, and then waits for a notification $c_i$ from the counter.
		Depending when this notification arrives, when $\clock=\param$ (meaning the counter has value $0$) or when $\clock<\param$ (meaning the counter has positive value), the controller moves to the corresponding intermediary states.
		\item For the counter involved $(c_i,??zt_i,zt1_i^i)$ $(zt1_i,\clock=\param,!!c_i,\{\clock:=0\},zt2_i^i)$ $(zt2_i^i,??tick,c_i)$.\\
		The processes encoding the counter involved, after receiving the instruction, send a notification $c_i$ when their clock reaches~$\param$.
		\item For the counter not involved $(c_j,??zt_i,zt1_i^j)$ $(zt1_i,\clock=\param,\epsilon,\{\clock:=0\},zt2_i^j)$ $(zt2_i^j,??tick,c_j)$.\\
		The processes encoding the counter not involved just reset their clock when it reaches~$\param$.
	\end{itemize}
	\item[] Finally, there is an additional transition $(idle,\epsilon,\{\clock:=0\},idle)$ used to keep the clock of idle processes below~$\pval(\param)$.
\end{description}

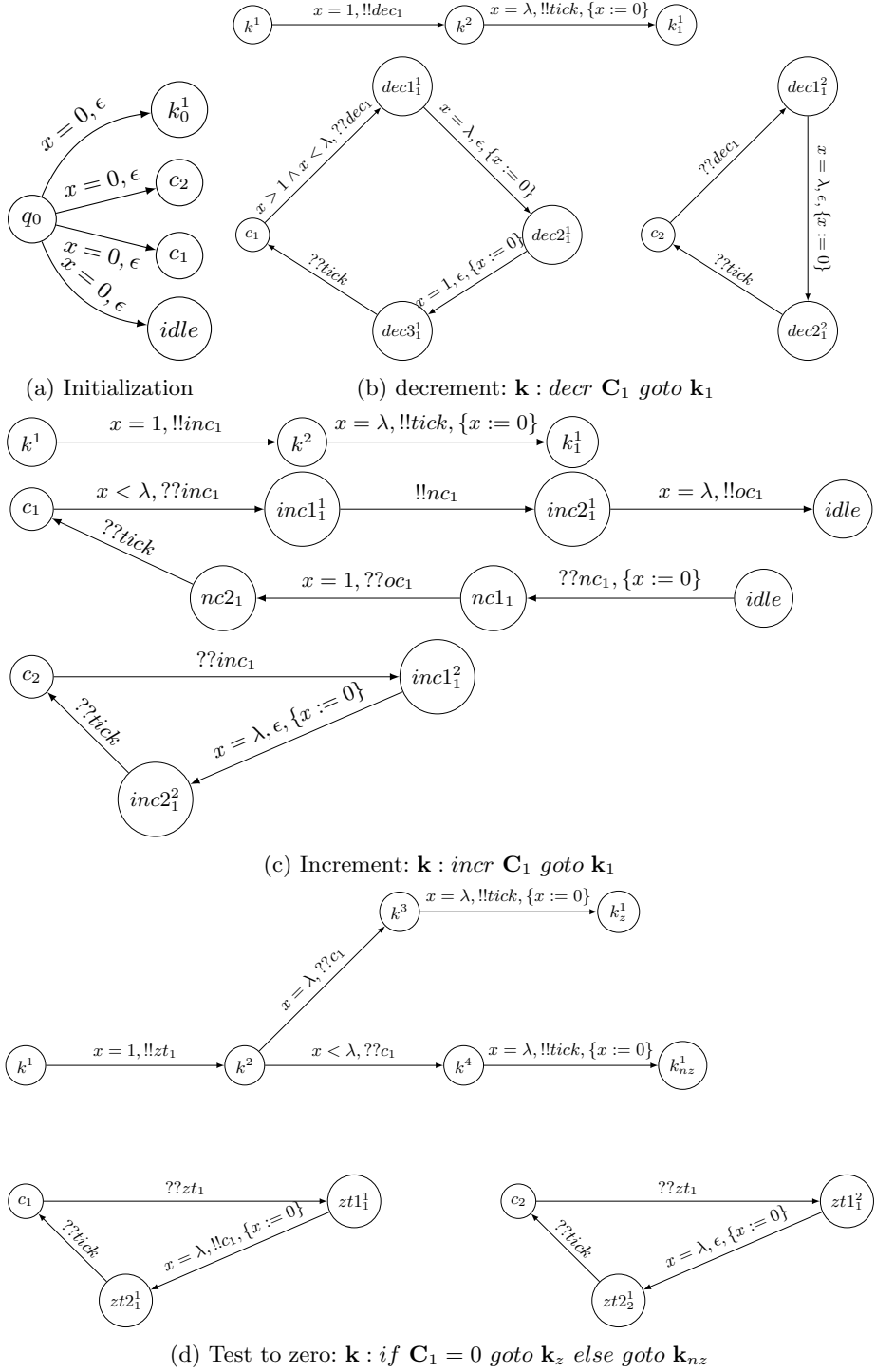
\begin{figure}
	\begin{subfigure}[b]{0.25\textwidth}
	\resizebox{\textwidth}{!}{
	\begin{tikzpicture}

		\node[draw,circle] (0) {$\locinit$};
		\node[draw,circle] (c1) [right of=0,xshift=1cm,yshift=-0.5cm] {$c_1$};
		\node[draw,circle] (c2) [above of=c1] {$c_2$};
		\node[draw,circle] (k) [above of=c2] {$k_0^1$};
		\node[draw,circle] (i) [below of=c1] {$idle$};

		\path[-latex,sloped,above] (0)
			edge node[below] {$\clock=0,\epsilon$} (c1)
			edge node {$\clock=0,\epsilon$} (c2)
			edge[bend left] node {$\clock=0,\epsilon$} (k)
			edge[bend right] node {$\clock=0,\epsilon$} (i);
	\end{tikzpicture}
	}
	\caption{Initialization}
	\label{figure:construction:initialization}
	\end{subfigure}
	\begin{subfigure}[b]{0.7\textwidth}
	\resizebox{\textwidth}{!}{
	\begin{tikzpicture}[node distance=4cm]
		\node[draw,circle] (k1) {$k^1$};
		\node[draw,circle] (k2) [right of=k1] {$k^2$};
		\node[draw,circle] (k3) [right of=k2] {$k_1^1$};
		\path[-latex,sloped,above]
		(k1) edge node {$\clock=1,!!dec_1$} (k2)
		(k2) edge node {$\clock=\param,!!tick,\{\clock:=0\}$} (k3);

		\node[draw,circle] (c1) [below of=k1] {$c_1$};
		\node[draw,circle] (i1) [above right of=c1] {$dec1_1^1$};
		\node[draw,circle] (i2) [below right of=i1] {$dec2_1^1$};
		\node[draw,circle] (i3) [below left of=i2,yshift=1cm] {$dec3_1^1$};
		\path[-latex,sloped,above]
		(c1) edge node {$\clock>1\wedge\clock<\param,??dec_1$} (i1)
		(i1) edge node {$\clock=\param,\epsilon,\{\clock:=0\}$} (i2)
		(i2) edge node {$\clock=1,\epsilon,\{\clock:=0\}$} (i3)
		(i3) edge node {$??tick$} (c1);

		\node[draw,circle] (c2) [right of=i2,xshift=-2cm] {$c_2$};
		\node[draw,circle] (i12) [above right of=c2] {$dec1_1^2$};
		\node[draw,circle] (i22) [below right of=c2,yshift=1cm] {$dec2_1^2$};
		\path[-latex,sloped,above]
		(c2) edge node {$??dec_1$} (i12)
		(i12) edge node {$\clock=\param,\epsilon,\{\clock:=0\}$} (i22)
		(i22) edge node {$??tick$} (c2);

	\end{tikzpicture}
	}
	\caption{decrement: $\TwoCMinstruction:decr~\TwoCMCounter_1~goto~\TwoCMinstruction_1$}
	\label{figure:construction:decrement}
	\end{subfigure}

	\begin{subfigure}[b]{\textwidth}
	\resizebox{\textwidth}{!}{
	\begin{tikzpicture}[node distance=4cm]
		\node[draw,circle] (k1) {$k^1$};
		\node[draw,circle] (k2) [right of=k1] {$k^2$};
		\node[draw,circle] (k3) [right of=k2] {$k_1^1$};
		\path[-latex,sloped,above]
		(k1) edge node {$\clock=1,!!inc_1$} (k2)
		(k2) edge node {$\clock=\param,!!tick,\{\clock:=0\}$} (k3);

		\node[draw,circle] (c1) [below of=k1,yshift=3cm] {$c_1$};
		\node[draw,circle] (i1) [right of=c1] {$inc1_1^1$};
		\node[draw,circle] (i2) [right of=i1] {$inc2_1^1$};
		\node[draw,circle] (i3) [right of=i2] {$idle$};
		\path[-latex,sloped,above]
		(c1) edge node {$\clock<\param,??inc_1$} (i1)
		(i1) edge node {$!!nc_1$} (i2)
		(i2) edge node {$\clock=\param,!!oc_1$} (i3);

		\node[draw,circle] (nc2) [below right of=c1,yshift=1.5cm] {$nc2_1$};
		\node[draw,circle] (nc1) [right of=nc2] {$nc1_1$};
		\node[draw,circle] (id) [right of=nc1] {$idle$};
		\path[-latex,sloped,above]
		(id) edge node {$??nc_1,\{\clock:=0\}$} (nc1)
		(nc1) edge node {$\clock=1,??oc_1$} (nc2)
		(nc2) edge node {$??tick$} (c1);

		\node[draw,circle] (c2) [below of=c1,yshift=1.5cm] {$c_2$};
		\node[draw,circle] (i12) [right of=c2,xshift=2cm] {$inc1_1^2$};
		\node[draw,circle] (i22) [below right of=c2,xshift=-1cm,yshift=1cm] {$inc2_1^2$};
		\path[-latex,sloped,above]
		(c2) edge node {$??inc_1$} (i12)
		(i12) edge node {$\clock=\param,\epsilon,\{\clock:=0\}$} (i22)
		(i22) edge node {$??tick$} (c2);

	\end{tikzpicture}
	}
	\caption{Increment: $\TwoCMinstruction:incr~\TwoCMCounter_1~goto~\TwoCMinstruction_1$}
	\label{figure:construction:increment}
	\end{subfigure}

	\begin{subfigure}{\textwidth}
	\resizebox{\textwidth}{!}{
	\begin{tikzpicture}[node distance=4cm]
		\node[draw,circle] (k1) {$k^1$};
		\node[draw,circle] (k2) [right of=k1] {$k^2$};
		\node[draw,circle] (k3) [above right of=k2] {$k^3$};
		\node[draw,circle] (k4) [right of=k2] {$k^4$};
		\node[draw,circle] (kz) [right of=k3] {$k_z^1$};
		\node[draw,circle] (knz) [right of=k4] {$k_{nz}^1$};

		\path[-latex,sloped,above]
		(k1) edge node {$\clock=1,!!zt_1$} (k2)
		(k2) edge node {$\clock=\param,??c_1$} (k3)
		(k2) edge node {$\clock<\param,??c_1$} (k4)
		(k3) edge node {$\clock=\param,!!tick,\{\clock:=0\}$} (kz)
		(k4) edge node {$\clock=\param,!!tick,\{\clock:=0\}$} (knz)

		;

		\node[draw,circle] (c1) [below of=k1,yshift=1.5cm] {$c_1$};
		\node[draw,circle] (i12) [right of=c1,xshift=2cm] {$zt1_1^1$};
		\node[draw,circle] (i22) [below right of=c1,xshift=-1cm,yshift=1cm] {$zt2^1_1$};
		\path[-latex,sloped,above]
		(c1) edge node {$??zt_1$} (i12)
		(i12) edge node {$\clock=\param,!!c_1,\{\clock:=0\}$} (i22)
		(i22) edge node {$??tick$} (c1);

		\node[draw,circle] (c2) [right of=i12,xshift=-1cm] {$c_2$};
		\node[draw,circle] (i12a) [right of=c2,xshift=2cm] {$zt1_1^2$};
		\node[draw,circle] (i22a) [below right of=c2,xshift=-1cm,yshift=1cm] {$zt2^1_2$};
		\path[-latex,sloped,above]
		(c2) edge node {$??zt_1$} (i12a)
		(i12a) edge node {$\clock=\param,\epsilon,\{\clock:=0\}$} (i22a)
		(i22a) edge node {$??tick$} (c2);

	\end{tikzpicture}
	}
	\caption{Test to zero: $\TwoCMinstruction:if~\TwoCMCounter_1=0~goto~\TwoCMinstruction_z~else~goto~\TwoCMinstruction_{nz}$}
	\end{subfigure}

	\caption{Representation of the construction\label{fig:const}}
\end{figure}

Given a valuation $\pval$ of the parameter, we say that a configuration $\config$ of the network encodes a configuration $(k,v_1,v_2)$ of the two-counter machine if for all~$\pid$, $\config[\pid]=(\loc,\clock)$ then either $\clock>\pval(\param)$ or $\loc \in\{c_1,c_2,k^1,idle\}$.
Moreover all processes with a clock lower than $\pval(\param)$ and not in state $idle$ must agree on their clock valuation if they have the same state.
Finally, if $\config[\pid]=(k^1,z)$ then for all $\pid'$ such that $\config[\pid']=(c_1,y)$ we have $v_1=y-z$ and similarly for~$v_2$.

Given an execution $\rho$, and a time~$t$ we denote by $\rho_{T=t}$ the configuration obtained when considering $\rho$ at global time~$t$.
Notice that $\rho_{T=t}$ may not be a configuration that appears in $\rho$ since it can be a configuration obtain during the elapsing of time in a transition.

We will prove that, for any execution $\rho$, either $\rho_{T=k*\pval(\param)+1/2}$ is not defined (the execution time never reaches $k*\pval(\param)+1/2$) or $\rho_{T=k*\pval(\param)+1/2}$ encodes~$\TwoCMconfig_k$, \ie{} the $k$th configuration of the two-counter machine.
\medskip

We start by some remarks on the shape of possible executions.
\begin{enumerate}
	\item If two processes are in the controller part, then their clocks are equal modulo~$\pval(\param)$.
		Indeed, in the controller part, the clock is reset only when it reaches~$\pval(\param)$.
	\item It follows that, by definition of the protocol, the message $tick$ is sent only at time units multiple of $\pval(\param)$.
	\item Moreover, the instruction messages ($inc_i,dec_i,zt_i$) are only sent at global time units of the form $k*\pval(\param)+1$
	\item Consider a process in state $c_i$ with clock value lower than $\pval(\param)$. Assume that the global time is of the form $k*\pval(\param) +1$. If this process does not receive an instruction message without delay, it will not be able to receive any before time $(k+1)*\pval(\param)+1$, thus it cannot take any transition before $(k+1)*\pval(\param)+1$.
		Note that at this time, its clock will be greater than $\pval(\param)$, thus the guard prevents it to take any transition for the rest of the execution.
	\item With the same idea, if the process is in an intermediary state $nc2_i,dec2_i^j,\linebreak[3]dec3_i^i,inc2_i^j,zt2^i_i,zt2^j_i$ and does not receive a $tick$ message at time $k*\pval(\param)$, we are certain that at time $(k+1)*\pval(\param)$ its clock will be above $\pval(\param)$ and it will thus be stuck forever.
	\item Similarly if a process is in state $dec1_i^j,dec1_i^i,dec2^i_i,inc1_i^j,k^2$ and does not reset the clock when it is possible it will be stuck forever. 
	\item If an increment is requested by the controller part but the counter value is already equal to $\pval(\param)-1$ \ie{} the clock value of the counter process is equal to~$\pval(\param)$, then the processes are sent to an error state and thus for the rest of the execution there will not be any processes in the counter part.
	\item Similarly, if an increment is requested while no processes are left in the idle state, then the execution gets stuck in the next zero test.
\end{enumerate}

In other words, if a process does not behave correctly, its clock will increase over $\pval(\param)$ and the process will be stuck forever.
\begin{example}
Before going further, let us first give some example of the behavior of the network encoding the two-counter machine.
\begin{description}
\item[Successful decrement] $\TwoCMinstruction:decr~c_1~goto~\TwoCMinstruction_1$ with $v_2\geq v_1$ and $v_2+1 \leq \pval(\param)$ (those assumptions only matter for the order of the transitions).
\begin{scriptsize}
  \begin{align*}
  \left( \begin{array}{c}
k^1,0 \\
c_1,v_1 \\
c_2,v_2 
\end{array} \right)
\xrightarrow{1,1,!!dec_1,\{2,3\}} 
\left( \begin{array}{c}
k^2,1 \\
dec1_1^1,v_1+1 \\
dec1^2_1,v_2+1 
\end{array} \right)
\xrightarrow{\param-(v_2+1),3,\epsilon,\emptyset} 
\left( \begin{array}{c}
k^2,\param-v_2 \\
dec1_1^1,v_1+\param-v_2 \\
dec2_1^2,0 
\end{array} \right)
\\
\xrightarrow{v_2-v_1,2,\epsilon,\emptyset} 
\left( \begin{array}{c}
k^2,\param-v_1 \\
dec2_1^1,0 \\
dec2_1^2,v_2-v_1
\end{array} \right)
\xrightarrow{1,2,\epsilon,\emptyset} 
\left( \begin{array}{c}
k^2,\param-v_1 +1\\
dec2_1^1,0 \\
dec2_1^2,v_2-v_1+1
\end{array} \right)
\xrightarrow{v_1-1,1,!!tick,\{2,3\}} 
\left( \begin{array}{c}
k^1_1,0\\
c_1,v_1-1 \\
c_2,v_2
\end{array} \right)
  \end{align*}
\end{scriptsize}
  
\item[Failed decrement] $\TwoCMinstruction:decr~c_1~goto~\TwoCMinstruction_1$ with $v_2\geq v_1$ and $v_2+1 \leq \pval(\param)$\LongVersion{ (those assumptions only matter for the order of the transitions)}.
\begin{scriptsize}
  \begin{align*}
\left( \begin{array}{c}
k^1,0 \\
c_1,0 \\
c_2,v_2 
\end{array} \right)
\xrightarrow{1,1,!!dec_1,\{2,3\}} 
\left( \begin{array}{c}
k^2,1 \\
c1_1,1 \\
dec1^2_1,v_2+1 
\end{array} \right)
\xrightarrow{\param-(v_2+1),3,\epsilon,\emptyset} 
\left( \begin{array}{c}
k^2,\param-v_2 \\
c_1,\param-v_2 \\
dec2_1^2,0 
\end{array} \right)
\end{align*}
\end{scriptsize}

Notice that for the rest of the execution the process $2$ will be stuck in $c_1$, unable to perform any action, nor to receive any message.
\LongVersion{We give two further examples (successful and failed increment) in \cref{appendix:proof:thm:reconf-PTBP-details:executions}.}
\end{description}
\end{example}

Let us now show by induction on~$k$ that either $\rho_{T=(k+1)*\pval(\param)+1/2}$ is not defined or $\rho_{T=k*\pval(\param)+1/2}$ encodes~$\TwoCMconfig_k$. 

The case $k=0$ is direct. By definition it is easy to see that $\rho_{T=1/2}$ encodes~$\config_0$.

Assume that the property holds for~$k$.
Let $\rho$ be an execution such that $\rho_{T=k*\pval(\param)+1/2}$ encode $\TwoCMconfig_k$. 
By the above remarks we have seen that if the network does not behave in the correct way it will get stuck before the next $\pval$ time unit thus $\rho_{T=(k+2)*\pval(\param)+1/2}$. 
The only thing left to show that the reduction is correct is that the clocks are reset at the right time to correctly model increment and decrement and that zero tests are correct.
For the latter, it is easy to see that by construction the controller part goes to the $k_z$ instruction if and only if its clock is equal to the counter clock hence the counter is equal to~$0$, otherwise it moves to $k_{nz}$.
For the former, the clocks evolve as in~\cite{JLR15}.
The only difference is for the increment where we need to introduce a new process used to guess when the clock value of the counter is equal to $\pval(\param)-1$.

We thus obtain that if the controller part can reach $k_{acc}$ then since the execution correctly encodes the run, the run must terminate.
Conversely if the run is infinite, for any $N$ and any $\pval$, any execution will either be infinite (and correct) thus never reaching $k_{acc}$, or eventually get stuck either because of an error in message, or because the counter clock is equal to $1$ during an increment, or because there will not be enough processes in the idle state.

This concludes the proof that $\exists$-EF-existence is undecidable for 1-clock PTBP in the reconfigurable semantics.

For $\exists$-EF-universality, notice that the error state $error$ is reachable only if an increment is requested when the counter value is equal to $\pval(\param)-1$. Thus if the error state is reached for all parameter valuations, this means that the run is unbounded.
Conversely if the run is unbounded for all parameter valuations, at some point the counter value is equal to $\pval(\param)-1$ during an increment and thus the error state is reachable.
To conclude on the undecidability of  $\exists$-EF-universality, we just have to recall that we consider rational valuations for the parameters, but in this proof we only used integer valuations.
This does not harm the proof of undecidability since we can modify the aforementioned gadget given in \LongVersion{\cref{fig:gadget}}\ShortVersion{\cite{ADFL18}} by replacing the state not integer by $error$.
This modification ensures that $error$ is reachable for any non integer valuation and the above argument that it is reachable for all integer valuations if and only if the two-counter machine is unbounded.\hfill \qed

\end{proof}

\section{Clique}\label{section:clique}

In broadcast protocol networks with a parametric number of processes, the topology of message communication plays a decisive role on the decidability status. In this section, we thus investigate a communication setting in which every message is received by all the other processes. We call these networks clique networks.

Formally, the semantics of a clique network is the restriction of the semantics given in~\cref{section:definitions} to internal transitions and broadcast transitions in which the set of receivers is always composed of all processes.

\subsection{AF problems in the clique semantics}

We first rule out the $\exists$-AF problem for the clique semantics, as we can show from~\cite{fournier2015parameterized} that it is undecidable already without any clock.
\ea{tu n'as pas une conf / revue ? ça me gêne un tout petit peu de citer une thèse (aussi sérieuse soit-elle)}
\pf{non :(, en fait on as jamais regardé AF, on fait toujours du EF mais dans ma these on considere des MDP et le probleme proba min de reach =1 . pour la preuve d'indecidabilité y'a pas de proba dans les MDP du coup ca reviens a faire AF. Mais on ne l'a jamais ecrit comme ca ailleur}

\ea{le théorème suivant marche évidemment quelles que soient les bornes}

\newcommand{\thmCliqueAFA}{The $\exists$-AF problem is \undec{undecidable} for PTBP with no clock in the clique semantics.}
\begin{theorem}\label{thm:clique-AFA}
	\thmCliqueAFA{}
\end{theorem}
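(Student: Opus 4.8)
Since a PTBP with no clock has no timing parameters either, the two variants $\exists$-AF-existence and $\exists$-AF-universality coincide here; I write $\exists$-AF for this common problem, and note that a clockless PTBP under the clique semantics is simply a (non-blocking) broadcast protocol. The plan is to reduce from the halting problem of deterministic two-counter machines, reusing the clique-network construction of~\cite{fournier2015parameterized}: the undecidability proof given there for ``is the minimal probability of reaching the target equal to~$1$'' uses only Dirac distributions, and ``minimal probability $1$ over all schedulers'', read without randomness, is exactly the $\exists$-AF question. Concretely, for every two-counter machine~$M$ we build a clockless PTBP $\PTBP_M$ and a target state~$\locfinal$ such that $M$ halts iff there is a network size~$N$ for which every maximal execution of the $N$-process clique network reaches~$\locfinal$.

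\emph{Construction.} One process is elected as a \emph{controller} tracking the current instruction: the only action out of the initial state is a broadcast, the first process to fire it becomes the controller and all the others move to an \emph{idle} pool, and since a maximal execution cannot stay in the initial configuration, exactly one controller is elected. Counter~$c_i$ is encoded by the number of processes in a state~$u_i$. An increment is realised \emph{exactly} by a small election gadget: the controller broadcasts $\mathit{inc}_i$ (forcing every idle process to a transient state), one transient process then broadcasts a ``winner'' message and moves to~$u_i$ while all the others are forced back to idle upon receiving it; the controller waits for the winner message, so if no idle process remains the network simply deadlocks. The combined zero-test-and-decrement instruction (``if $c_i=0$ goto~$L_1$ else decrement and goto~$L_2$'') is the delicate point: the controller broadcasts a test message (received by every~$u_i$ process) and then nondeterministically commits to one branch; the branches are designed, using the~$u_i$ processes together with an auxiliary always-present token process per counter, so that an honest decrement is always possible when $c_i>0$ (and idle processes remain), whereas a controller committing to the zero-branch while some~$u_i$ process survives, or to the decrement-branch while $c_i=0$, is driven to~$\locfinal$ --- in one direction by leaving the controller with a ``give up'' edge to~$\locfinal$ as its only successor, in the other by sending the surviving processes to a sink whose only continuations broadcast a message that relays every process to~$\locfinal$. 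This is exactly where ``for all runs'' is exploited: ``a left-over process exists / the controller cheated'' becomes ``the run is forced into~$\locfinal$''. Finally, the HALT instruction leads directly to~$\locfinal$, and exhausting the idle pool --- possible only if~$N$ is too small --- leads to a harmless deadlock that never reaches~$\locfinal$.

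\emph{Correctness.} If $M$ halts in~$k$ steps with maximal counter value~$m$, take~$N$ larger than $2m$ plus a constant for the controller, the tokens and the transient processes the gadgets momentarily use. As $M$ is deterministic the faithful simulation is essentially unique; with this~$N$ it never exhausts the idle pool and reaches HALT, hence~$\locfinal$, while every execution that deviates from it is driven to~$\locfinal$; so every maximal execution of the $N$-process network reaches~$\locfinal$. Conversely, if $M$ does not halt, then for every~$N$ the faithful simulation is a maximal execution that either runs forever among the simulation states (when $M$'s counters stay below~$N$) or deadlocks when it needs more idle processes than available; in either case it avoids~$\locfinal$, so no network size works. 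Hence $M$ halts iff the $\exists$-AF instance is positive, and since two-counter-machine halting is undecidable, so is $\exists$-AF --- using only a clockless PTBP.

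\emph{Main obstacle.} The crux is the faithful clique implementation of the zero-test-and-decrement: reception is forced and there is no ``exactly-one-receiver'' primitive, so one must simultaneously guarantee that (i)~an honest test-and-decrement is available whenever the counter is positive and idle processes remain, (ii)~every dishonest move (wrong branch, over- or under-recruiting) is provably \emph{forced} into~$\locfinal$, not merely able to reach it, and (iii)~the only $\locfinal$-avoiding deadlocks are those caused by an undersized network. Obtaining (i)--(iii) together is the heart of the argument, and it is precisely why $\exists$-AF is undecidable in the clique semantics whereas the existential reachability flavour (``EF'', i.e.\ coverability of a broadcast protocol) is not.
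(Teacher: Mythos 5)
Your proposal is correct and follows essentially the same route as the paper: both reduce the halting problem of two-counter machines via the clique-network construction of~\cite[Chapter~III, Theorem~3.5]{fournier2015parameterized}, encoding counter values by the number of processes in a dedicated state, exploiting the clique semantics for increments/decrements, and implementing zero tests by a non-deterministic guess whose incorrectness is detected and forces the run into the target/error state. The paper's own proof is merely a citation plus a short intuition, so your more detailed gadget description and correctness argument are a faithful elaboration of the same reduction rather than a different approach.
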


\begin{proof}
	In \cite[Chapter III, Theorem~3.5]{fournier2015parameterized} it is shown that one can reduce the halting problem of a two-counter machine (which is undecidable \cite{minsky1967computation}) to the AF problem in a clique network without clocks.

	Intuitively the reduction goes as follows: the values of the counter are encoded by the number of processes in a given state. Increment and decrement of counter are easy to encode since in the clique semantics when one process sends a message everyone receives it, thus we can ensure that only one process performs the increment or decrement. The difficulty comes from the zero tests. Indeed, since we cannot force processes to answer we cannot differentiate between the case where there is no process encoding a counter and the case where the  processes do not answer. To tackle this problem, zero tests are implemented non-deterministically: if we choose that the counter is zero, a message is sent. If it was not the case, then the processes encoding the counter value move to an error state. In the case we choose that the value is not zero, the network is locked until a process encoding the counter sends a message or a process moves to the error state. This encoding ensures that every run that does not encode truthfully the two-counter machine reaches the error state. Thus by adding a transition from the halting state of the counter machine toward the error state, we can ensure that every path reaches the error state if and only if the two-counter machine halts. \hfill \qed
	\ea{à titre personnel, je virerais toute la preuve puisque c'est déjà prouvé, et qu'elle pas réutilisée (pour modification) ici}
\pf{on se ressert de l'idée pour la preuve de \cref{thm:clique-EFULUPTA} c'est pour ca que j'ai remis l'idee la}
\end{proof}

\subsection{EF problems in the clique semantics}

Recall that the proof of \cref{thm:reconf-PTBP} 
	does not rely on the reconfigurable semantics particularity.
In fact the strong synchronization of processes in the clique semantics makes it even easier.
We thus obtain the following lemma:
%
%

\ea{une fois de plus, ne fonctionne que pour des paramètres non bornés}

\begin{lemma}\label{lemma:clique-EFPTA}\label{lemma:clique-EFUPTA}
	The $\exists$-EF-existence and $\exists$-EF-universality problems are \undec{undecidable} for PTBP.
\end{lemma}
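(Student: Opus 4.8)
The plan is to replay the single-clock reduction from the proof of \cref{thm:reconf-PTBP} directly in the clique semantics. I would reuse the very same protocol $\mathcal{P}$ — controller part, two counter parts, idle part, counter values stored as clock offsets relative to the controller, each instruction triggered by a broadcast $dec_i$, $inc_i$, $zt_i$ or $tick$ — together with the same gadget at the head of $\mathcal{P}$ forcing $\param$ to be an integer, and argue that all the properties established there survive once every receiver set is forced to be $\{1,\dots,N\}$.

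First I would observe that the construction is insensitive to the choice of receiver set, in the following precise sense: in every gadget a broadcast is acted upon only by processes that carry a matching reception edge, and since broadcast is non-blocking, a process with no such edge is left unchanged whether or not it lies in the receiver set. Consequently, enlarging every receiver set to the whole network introduces no extra behaviour — an idle process ignores each of $tick$, $inc_i$, $dec_i$, $zt_i$, $oc_i$ and $c_i$; a controller process only ever reacts to the $c_i$ notification it expects during a zero-test; an uninvolved counter-$j$ process follows only its dedicated ``not involved'' gadget; and an involved counter-$i$ process behaves exactly as in the reconfigurable case. The single place where clique delivery genuinely changes the dynamics is the increment gadget, where the message $nc_i$ is now received by \emph{all} idle processes rather than by a single chosen one; but this is harmless, since they all reset their clock at the same instant, then all receive $oc_i$ and $tick$ simultaneously, stay perfectly synchronised, and jointly take over the role of counter~$i$, while the old counter-$i$ processes all migrate to $idle$ together. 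Hence the structural invariant ``all processes sharing a common non-$idle$ state agree on their clock value'' is preserved, and so is the encoding lemma: for every run $\rho$ and every $k$, either $\rho_{T=k*\pval(\param)+1/2}$ is undefined or it encodes the $k$-th configuration $\TwoCMconfig_k$ of the two-counter machine.

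Given this encoding, the two undecidability arguments carry over verbatim. For $\exists$-EF-existence I take $\locfinal$ to be the accepting-instruction state of the controller: some run reaches it iff the machine halts. For $\exists$-EF-universality I take $\locfinal = error$: the state $error$ is entered only when an increment is requested on a counter already equal to $\pval(\param)-1$, so $error$ is reachable for every $\pval\in\bounds$ iff the machine's counters are unbounded along its deterministic run; re-routing the non-integer outcome of the head gadget to $error$ handles the rational parameter valuations exactly as in the proof of \cref{thm:reconf-PTBP}. Since halting and boundedness of two-counter machines are undecidable, both problems are undecidable, already for one-clock PTBP in the clique semantics.

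The only part that is not a mere transcription — and hence the main point to get right — is the verification in the second paragraph: that the increment gadget stays sound when all idle processes are recruited at once and when broadcasts additionally hit the controller and the uninvolved counter processes, i.e.\ that no ``cheating'' run can reach the accepting state (a desynchronised process pushes its clock past $\param$, becomes permanently stuck, and stalls the controller at the next zero-test on that counter) and that no run reaches $error$ without a genuine overflow.
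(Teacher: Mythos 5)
Your proposal matches the paper's argument exactly: the paper proves this lemma simply by observing that the reduction in the proof of \cref{thm:reconf-PTBP} does not rely on any peculiarity of the reconfigurable semantics, and that the stronger synchronisation of the clique semantics only makes the encoding easier, so both the halting (EF-existence via $k_{acc}$) and boundedness (EF-universality via $error$) reductions carry over. Your additional verification that forcing $R=\{1,\dots,N\}$ introduces no new behaviour (non-blocking receptions, and all recruited idle processes staying synchronised in the increment gadget) is exactly the check the paper leaves implicit, and it is correct.
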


This undecidability does not hold in the case where each parameter appears either always as an upper bound or always as a lower bound in guards (but not both).
We thus consider in the following the case of L/U-PTBP.

\section{1-clock L/U-PTBP}\label{section:LU}

Since the L/U restriction brings some decidability to PTAs, we focus in this section on L/U-PTBP. Recall that L/U-PTA are expressive enough to model classical examples from the literature~\cite{HRSV02}, such as root contention or Fischer's mutual exclusion algorithm. As a consequence, L/U-PTBP make an interesting subclass of PTBP.

Due to the undecidability results of~\cite{ADRST11} for processes with 2 clocks (already without parameters), we consider in this section L/U-PTBP with one clock only.
When considering L/U-PTBP\LongVersion{ (PTBP where each parameter appear either as an upper bound in guards or as a lower bound but not both)}, we can get the following monotonicity result on the timing parameter valuations.


\begin{lemma}\label{lem:mono}
	Given $\PTBP$ an L/U-PTBP with one clock, a network size $N\in\mathbb{N}$, and a parameter valuation $\pval$,  for all valuations $\pval'$ such that for all upper-bound parameters~$\paramu$, $\pval(\paramu)\leq \pval'(\paramu)$ and for lower-bound parameters~$\paraml$, $\pval(\paraml)\geq \pval'(\paraml)$ we have that $\forall\rho\in \Execs(\PTBP,N,\pval)$, $\exists\rho'\in\Execs(\PTBP,N,\pval')$ such that $\rho$ is a prefix of $\rho'$.
\end{lemma}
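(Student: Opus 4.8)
The plan is to prove the statement by induction on the length of the execution $\rho$, showing that every transition of $\rho$ under the valuation $\pval$ can be replayed verbatim (same delay $t$, same active process $\pid$, same edge $\edge$, same receiver set $R$) under the valuation $\pval'$, producing an execution $\rho'$ that visits exactly the same configurations. The base case is trivial: both $\rho$ and $\rho'$ start from the same initial configuration $\config_0$ (all processes in $\locinit$ with clocks at $\clockvalzero$), which does not depend on the parameter valuation. For the inductive step, I would assume that the prefix $\config_0 \to_{\pval} \cdots \to_{\pval} \config_n$ of $\rho$ has been matched by an identical sequence $\config_0 \to_{\pval'} \cdots \to_{\pval'} \config_n$, and consider the next transition $\config_n \xrightarrow{t,\pid,\edge,R}_{\pval} \config_{n+1}$.

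The key observation is that the only place where the parameter valuation enters the definition of the transition relation is in the guard-satisfaction relation $\models_{\pval}$: the elapse of time, the update functions, the choice of receiver set, and the non-blocking reception rule are all parameter-independent. So it suffices to show that every guard check that succeeds under $\pval$ in this transition also succeeds under $\pval'$. Here I would invoke the L/U structure together with the single-clock hypothesis. Concretely, fix the clock $\clock$ and its value $w$ at the moment a guard $\clock \compOp a$ is evaluated (this value is the same in $\config_n$ under both runs, by the induction hypothesis, and after adding the same delay $t$). If $a \in \grandqplus$ the check is unchanged. If $a = \paramu$ is an upper-bound parameter, then $\compOp \in \{<,\leq\}$, and from $w \compOp \pval(\paramu)$ and $\pval(\paramu) \leq \pval'(\paramu)$ we get $w \compOp \pval'(\paramu)$; monotonicity of $<$ and $\leq$ under increasing the right-hand side does the job. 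Symmetrically, if $a = \paraml$ is a lower-bound parameter, then $\compOp \in \{>,\geq\}$ and from $w \compOp \pval(\paraml)$ and $\pval'(\paraml) \leq \pval(\paraml)$ we get $w \compOp \pval'(\paraml)$. Since a guard in $\GuardsXP$ is a finite set (conjunction) of such atomic constraints, and all the atomic constraints that hold under $\pval$ still hold under $\pval'$, the whole guard $\guard$ is satisfied under $\pval'$. This applies both to the guard of the active edge $\edge$ and to any reception guard $\guard'$ of a receiving process, so the receiving behaviour (who actually receives $m$ and who is left unaffected) is identical in both runs; hence $\config_{n+1}$ is the same configuration and the transition $\config_n \xrightarrow{t,\pid,\edge,R}_{\pval'} \config_{n+1}$ is valid.

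Iterating the inductive step over all transitions of $\rho$ yields an execution $\rho'$ of the same length (or infinite length, if $\rho$ is infinite) with $\rho' = \config_0 \to_{\pval'} \config_1 \to_{\pval'} \cdots$ agreeing with $\rho$ configuration-by-configuration; in particular $\rho$ is a prefix of $\rho'$ (indeed an equal run up to that point), which is what the lemma asks for. One subtlety worth spelling out explicitly is the non-blocking semantics of broadcast: a process with no enabled reception edge is simply left unaffected, so we never need a guard to \emph{fail} under $\pval'$ — the monotonicity direction we have (guards enabled under $\pval$ stay enabled under $\pval'$) is exactly the one that lets us faithfully replay the chosen transition; we do not claim, and do not need, that $\rho'$ has no \emph{additional} behaviours. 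The main (mild) obstacle is just being careful that the clock values at each guard-evaluation point coincide in the two runs, which follows from the induction hypothesis because every update function is parameter-independent and the delays are chosen identical; with one clock there is no ordering-of-clocks bookkeeping to worry about, so this is clean.
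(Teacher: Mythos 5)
Your overall strategy is exactly the one the paper intends: the paper's proof is a one-liner (``direct from the semantics definition''), and your transition-by-transition replay, with the observation that the valuation only enters through guard satisfaction and that L/U guards are monotone in the stated direction, is the natural way to make that precise. The base case, the delay/update steps, and the treatment of the active edge's guard are all fine.

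There is, however, a genuine gap at the reception step. You claim that replaying with the \emph{same} receiver set $R$ yields identical receiving behaviour, arguing that the non-blocking semantics means ``we never need a guard to fail under $\pval'$''. That is backwards: in the semantics, a process $j\in R$ \emph{must} take a reception edge whenever one is enabled; it is left unaffected only when \emph{no} reception guard is satisfied. Since passing from $\pval$ to $\pval'$ only \emph{enlarges} the set of satisfied guards, a process that ignored the broadcast under $\pval$ (all its reception guards failed) may have an enabled reception edge under $\pval'$ and is then forced to move, so the resulting configuration differs and the verbatim replay breaks. In the reconfigurable semantics this is easily repaired: replay the transition with the smaller receiver set $R'\subseteq R$ consisting of exactly those processes that actually received the message under $\pval$; monotonicity guarantees their reception guards remain enabled, and everyone outside $R'$ is unaffected by definition. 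You should state this adjustment explicitly rather than keeping $R$ unchanged. In the clique semantics, where $R$ is forced to be the full set of processes, no such repair is available, and your argument (like the paper's terse remark that the lemma ``holds in both semantics'') does not establish the claim without an additional argument handling formerly-disabled reception guards that become enabled under~$\pval'$.
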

\begin{proof}
	The proof is direct from the semantics definition. Notice that we do not have full inclusion of $\Execs(\PTBP,N,\pval)$ in $\Execs(\PTBP,N,\pval')$ since we consider maximal executions and it may be the case that some executions of $\Execs(\PTBP,N,\pval)$ appear only as prefixes of executions of $\Execs(\PTBP,N,\pval')$.
	Notice also that this holds in both semantics (reconfigurable and clique). \hfill \qed
\end{proof}

A direct consequence of \cref{lem:mono} and the decidability of the EF problem for PTBP with a single clock and without parameters is the decidability of the $\exists$-EF-existence problems for L/U-PTBP with one clock. 

\ea{je pense qu'on peut étendre le résultat ci-dessous à du open bounded : en jouant sur le graphe des régions qui passent par des bornes ouvertes et en prenant la valeur juste au-dessus du ``min'' (ah, sauf qu'en fait ça contredirait \cref{thm:clique-EFULUPTA}, hem hem)

Par contre, très clairement, le raisonnement ne marche plus en unbounded, puisque $x > \infty$ signifie désactiver la transition, et j'ai un contre-ex trivial où EF-univ est vrai pour un L/U mais l'état goal n'est pas accessible si on désactive la transition
}

\newcommand{\thmEFULUPTA}{The $\exists$-EF-universality problem is \dec{decidable} for closed bounded L/U-PTBP with one clock in both semantics.
}
\begin{lemma}\label{lemma:EFULUPTA}
	\thmEFULUPTA{}
\end{lemma}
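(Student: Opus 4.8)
The key observation is the monotonicity given by \cref{lem:mono}: enlarging upper-bound parameters and shrinking lower-bound parameters only adds behaviours (as prefixes of maximal runs). For a closed bounded L/U-PTBP with bounds $\bounds$, there is a \emph{unique most permissive valuation} $\pval^\star$, namely $\pval^\star(\paramu)=\boundsupin{\paramu}{\bounds}$ for every upper-bound parameter and $\pval^\star(\paraml)=\boundinfin{\paraml}{\bounds}$ for every lower-bound parameter; because the bounds are closed, $\pval^\star$ itself belongs to $\bounds$. The plan is to reduce $\exists$-EF-universality over the whole (infinite) parameter domain to a single question about this one valuation, and then to a parameter-free reachability question that is decidable by \cref{thm:reconf-EFTA} (resp.\ its clique counterpart, which holds by the same argument since the clique semantics is a restriction).

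\textbf{Main steps.}
First I would show that, for a fixed network size $N$, $\forall \pval\in\bounds,\ \exists\rho\in\Execs(\PTBP,N,\pval),\ \rho\models\Diamond\locfinal$ holds \emph{if and only if} $\exists\rho\in\Execs(\PTBP,N,\pval^\star),\ \rho\models\Diamond\locfinal$. The ``only if'' direction is immediate since $\pval^\star\in\bounds$. For the ``if'' direction: given any $\pval\in\bounds$, we have $\pval(\paramu)\le\pval^\star(\paramu)$ for all upper-bound parameters and $\pval(\paraml)\ge\pval^\star(\paraml)$ for all lower-bound parameters; a run of $\Execs(\PTBP,N,\pval^\star)$ reaching $\locfinal$ — but run \emph{the other direction} of \cref{lem:mono}, i.e.\ from $\pval^\star$ we cannot directly shrink back. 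So instead I would apply \cref{lem:mono} with the roles reversed: take any $\pval\in\bounds$; then $\pval^\star$ is more permissive than $\pval$, so every run of $\Execs(\PTBP,N,\pval)$ is a prefix of some run of $\Execs(\PTBP,N,\pval^\star)$ — this is the wrong direction for reachability. The correct argument is: $\exists$-EF-universality asks that \emph{every} $\pval$ in $\bounds$ admits a reaching run; the hardest valuation is the \emph{least} permissive one. With L/U partitioning, the least permissive valuation in a closed box is $\pval_\star$ with $\pval_\star(\paramu)=\boundinfin{\paramu}{\bounds}$ and $\pval_\star(\paraml)=\boundsupin{\paraml}{\bounds}$, and it lies in $\bounds$ because the bounds are closed. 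By \cref{lem:mono}, any reaching run for $\pval_\star$ is a prefix of a reaching run for every $\pval\in\bounds$ (since each such $\pval$ is more permissive), and reaching $\locfinal$ is preserved under taking run extensions. Hence universality over $\bounds$ is equivalent to: $\exists N,\ \exists\rho\in\Execs(\PTBP,N,\pval_\star),\ \rho\models\Diamond\locfinal$, which is exactly $\exists$-EF for the parameter-free $1$-clock PTBP obtained by instantiating $\param$ to the rational constant $\pval_\star(\param)$ in every guard.

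\textbf{Finishing.} The instantiated protocol is a (non-parametric) $1$-clock PTBP, so $\exists$-EF is decidable by \cref{thm:reconf-EFTA} in the reconfigurable semantics; in the clique semantics the analogous parameter-free result also holds (the clique EF problem with one clock is decidable, being a sub-case of the same finite-region-per-process argument), which gives the ``both semantics'' claim. One technical point to dispatch: the constants in the instantiated guards are rationals (bounds are in $\grandn$, so actually integers here, but even rational constants are fine after rescaling clocks), so the hypotheses of the parameter-free decidability result are met. I expect the \emph{main obstacle} to be making precise that ``$\rho$ is a prefix of $\rho'$'' together with $\rho\models\Diamond\locfinal$ implies $\rho'\models\Diamond\locfinal$ — trivial here since reachability is a safety-of-prefix-type monotone property — and, more delicately, checking that the least-permissive valuation genuinely lies inside the (closed) box for \emph{every} parameter simultaneously, which is exactly where closedness of the bounds is used and where the corresponding statement fails for open bounds (consistent with the undecidability of $\exists$-EF-universality for open-bounded L/U-PTBP indicated elsewhere in the paper).
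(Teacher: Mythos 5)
After a false start with the ``most permissive'' valuation, you correctly settle on the least permissive one ($\paramu\mapsto\boundinfin{\paramu}{\bounds}$, $\paraml\mapsto\boundsupin{\paraml}{\bounds}$), which lies in the closed box, and reduce $\exists$-EF-universality to parameter-free $\exists$-EF via \cref{lem:mono}; this is exactly the paper's argument with its $\pval_{min}$ and $\PTBP_{min}$. The proposal is correct and takes essentially the same approach as the paper (the paper cites \cite{ADRST11} explicitly for the clique-semantics base case, which you should do as well rather than gesture at the region argument).
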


\begin{proof}
	Let $\PTBP$ be an L/U-PTBP with one clock, and $\bounds$ be the closed bounds on the parameters.
	Let $\pval_{min}$ be the minimal permissive valuation \ie{} the valuation such that for all upper-bound parameters $\paramu$, $\pval_{min}(\paramu)=\boundinf(\paramu,\bounds)$ and for all lower-bound parameters $\paraml$, $\pval_{min}(\paraml)=\boundsup(\paraml,\bounds)$. 
	By definition we have $\pval_{min}\in \bounds$.

	We define the PTBP without parameters $\PTBP_{min}$ as $\PTBP$ but replacing each occurrence of an upper-bound parameter $\paramu$ by $\boundinf(\paramu,\bounds)$ and each occurrence of a lower-bound parameter $\paraml$ by $\boundsup(\paraml,\bounds)$.	It is then easy to see that $\Execs(\PTBP,N,\pval_{min})=\Execs(\PTBP_{min},N)$. 

	Assume that for all $N$ there is no execution reaching $\locfinal$ in $\Execs(\PTBP_{min},N)$; then the above equality implies that the answer to $\exists$-EF-universality is false.

	Conversely assuming that there exists an execution reaching $\locfinal$ in $\Execs(\PTBP_{min},N)$ for some~$N$, we obtain by the equality and the monotonicity \cref{lem:mono} that this execution is a prefix of an execution of $\Execs(\PTBP,N,\pval)$ for any valuation~$\pval$.
	\ea{si on manque de place, je pense que ce raisonnement peut être largement condensé}

	Thus the $\exists$-EF-universality problem for $\PTBP$ is equivalent to the $\exists$-EF problem for $\PTBP_{min}$ and thus is decidable in the clique semantics (see \cite{ADRST11}) and in the reconfigurable semantics (see \cref{thm:reconf-EFTA}). \hfill \qed
\end{proof}

For the $\exists$-EF-existence problem, we can remove the assumption on the closed bounds.\ea{certes, mais dans la preuve tu as toujours besoin des bornes (sauf qu'elles peuvent être ouvertes) Le résultat devrait donc être pour closed bounded ou open bounded L/U, mais pour les (unbounded) L/U, ou bien ?}
\pf{Mouais, en fait, pour moi, les problemes ne sont definie qu'avec des bornes (sinon $\exists \pval\in\bounds$ n'a pas de sens) ce que je veut dire la c'est juste qu'on peut prendre n'importe quel forme de borne}\ea{oui mais il faut aussi considérer les unbounded, puisqu'on les a définis comme tels : je rajoute, du coup}

\ea{a priori ça doit marcher sans bornes, donc avec l'infini en sup, puisqu'on ne s'intéresse qu'à EF (à prouver, quand même ! car dans \cite{HRSV02}, ils ne font pas ça…)}

\begin{lemma}\label{lemma:EFLUPTA}
	The $\exists$-EF-existence problem is \dec{decidable} for (open or closed) bounded L/U-PTBP with one clock in both semantics.
\end{lemma}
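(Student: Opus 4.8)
The plan is to reduce $\exists$-EF-existence for a bounded L/U-PTBP $(\PTBP,\bounds)$ with one clock to the parameter-free $\exists$-EF problem, which is decidable for one-clock timed broadcast protocols in both semantics (\cref{thm:reconf-EFTA} for the reconfigurable semantics, and \cite{ADRST11} for the clique semantics). From $(\PTBP,\bounds)$ I would build a parameter-free one-clock PTBP $\PTBP'$ by substituting into every guard the ``most permissive admissible constant'' for each parameter: an upper-bound parameter $\paramu$ is replaced by $\boundsupin{\paramu}{\bounds}$ and a lower-bound parameter $\paraml$ by $\boundinfin{\paraml}{\bounds}$. The delicate point --- and the reason the statement also holds for open bounds --- is the comparison operator: if $\bounds(\paramu)$ is open at its supremum, a constraint $\clock\leq\paramu$ must become the \emph{strict} constraint $\clock<\boundsupin{\paramu}{\bounds}$ (and $\clock<\paramu$ stays strict), whereas if $\bounds(\paramu)$ is closed at the top the operator is preserved; symmetrically, $\clock\geq\paraml$ becomes $\clock>\boundinfin{\paraml}{\bounds}$ when $\bounds(\paraml)$ is open at its infimum. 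Since only constants are substituted, $\PTBP'$ still has one clock.

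Next I would prove that $(\PTBP,\bounds)$ is a positive instance of $\exists$-EF-existence iff $\PTBP'$ is a positive instance of $\exists$-EF. For the $\Rightarrow$ direction, start from $\pval\in\bounds$, $N$ and a maximal execution of $(\PTBP,N,\pval)$ reaching $\locfinal$. Using \cref{lem:mono} I may first push $\pval$ towards the supremum on upper-bound parameters and towards the infimum on lower-bound parameters --- staying inside $\bounds$, hence strictly below any open endpoint, but arbitrarily close --- so that the resulting execution still reaches $\locfinal$; for such a $\pval$ every satisfied parametric constraint $\clock\leq\paramu$ has clock value $<\boundsupin{\paramu}{\bounds}$ (respectively $\leq$ in the closed case), so the corresponding strictified constraint of $\PTBP'$ is satisfied, and the execution lifts to a run of $\PTBP'$ reaching $\locfinal$ (in the reconfigurable semantics one restricts the receiver sets so that no new reception is forced; in the clique semantics this lifting is exactly what \cref{lem:mono} provides). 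For the $\Leftarrow$ direction, take a run of $\PTBP'$ with $N$ processes reaching $\locfinal$; a finite prefix already reaches $\locfinal$, and it fires only finitely many transitions, hence tests the clock against each substituted constant only finitely often. For each upper-bound parameter $\paramu$ let $M_{\paramu}$ be the largest clock value so tested (take $M_{\paramu}=0$ if there is none): one has $M_{\paramu}<\boundsupin{\paramu}{\bounds}$ when $\bounds(\paramu)$ is open at the top (those constraints being strict in $\PTBP'$) and $M_{\paramu}\leq\boundsupin{\paramu}{\bounds}$ otherwise. Now choose $\pval(\paramu)\in\bounds(\paramu)$ to be $\boundsupin{\paramu}{\bounds}$ in the closed case and any rational in $(\max(M_{\paramu},\boundinfin{\paramu}{\bounds}),\boundsupin{\paramu}{\bounds})$ in the open case --- non-empty since the rationals are dense and $M_{\paramu}<\boundsupin{\paramu}{\bounds}$ --- and symmetrically for the lower-bound parameters. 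As the two parameter classes are disjoint this defines a single $\pval\in\bounds$ for which the finite prefix is a valid run of $\PTBP$; extending it to a maximal execution shows $(\PTBP,\bounds)$ answers positively.

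Finally, $\PTBP'$ is a timed broadcast protocol with one clock and no parameter, so $\exists$-EF for $\PTBP'$ is decidable: by \cref{thm:reconf-EFTA} in the reconfigurable semantics and by~\cite{ADRST11} in the clique semantics, which settles the lemma in both. I expect the main obstacle to be the open-bounds case: one cannot substitute the supremum/infimum while keeping the operator (\eg{} a constant guard $\clock=c$ immediately followed by $\clock\leq\paramu$ with $\bounds(\paramu)$ open at $c$ would be wrongly declared reachable), so non-strict parametric bounds must be turned into strict constants, and the $\Leftarrow$ direction must then exploit that an EF-witness is a \emph{finite} run, whose finitely many clock readings stay bounded away from the open endpoint, in order to recover an admissible parameter valuation.
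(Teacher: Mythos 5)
Your proposal is correct and follows essentially the same route as the paper's proof: you build the same parameter-free protocol (substituting $\boundsup$/$\boundinf$ and strictifying non-strict guards at open endpoints, which the paper calls $\PTBP_{max}$), use \cref{lem:mono} for one direction, and for the converse extract a finite witness prefix and pick a valuation strictly inside the open bound but above the largest clock value tested — exactly the paper's $v_u+\epsilon$ argument. No substantive difference to report.
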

\begin{proof}
	Let $\PTBP$ be an L/U-PTBP with one clock, and $\bounds$ be the bounds on the parameters.
%
	As for the $\exists$-EF-universality problem, we define a protocol $\PTBP_{max}$ with the difference that non-strict guards involving open bounded parameters are changed to strict guards.
	We define the PTBP without parameters $\PTBP_{max}$ as $\PTBP$ but
for all upper-bound parameters $\paramu$ if $\bounds(\paramu)$ is of the form $(\boundinf,\boundsup]$ or $[\boundinf,\boundsup]$ then every occurrence of $\paramu$ is replaced by $\boundsup$. Otherwise if $\bounds(\paramu)$ is of the form $(\boundinf,\boundsup)$ or $[\boundinf,\boundsup)$ then every guard of the form $x<\paramu$ or $x\leq \paramu$ is replaced by the guard $x<\boundsup$. We operate similarly for lower-bound parameters.

	Using the same argument as for the monotonicity \cref{lem:mono} it is easy to see that for any valuation $\pval\in \bounds$,  any execution $\rho$ in $\Execs(\PTBP,N,\pval)$ is a prefix of some execution in $\Execs(\PTBP_{max},N)$.
	Thus if some execution reaches $\locfinal$ for some $N$ and some $p$ in $\Execs(\PTBP,N,\pval)$, there is also an execution reaching $\locfinal$ in $\Execs(\PTBP_{max},N)$.

	The other direction is more subtle. 
	Assume that there exists an execution $\rho$ reaching $\locfinal$ in $\Execs(\PTBP_{max},N)$. 
	Let $\rho'$ be a finite prefix of $\rho$ reaching $\locfinal$. 
	We define a valuation $\pval\in \bounds$ that contains an execution identical to $\rho'$ as follows:
	Let $\paramu$ be an upper-bound parameter. Either $\bounds(\paramu)$ is of the form $(\boundinf,\boundsup]$ or $[\boundinf,\boundsup]$ and we define $\pval(\paramu)=\boundsup$. Or
	$\bounds(\paramu)$ is of the form $(\boundinf,\boundsup)$ or $[\boundinf,\boundsup)$. In this case, let $v_u$ be the maximal value of clock $\clock$ along $\rho'$ when $\clock$ is compared in a guard which was formerly $\clock\compOp \paramu$. 
	By definition of $\PTBP_{max}$ we know that $v_u<\boundsup$. We thus define $\pval(\paramu)=v_u+\epsilon$ with $\epsilon>0$, $\epsilon+v_u<\boundsup$ and $\epsilon>\boundinf-v_u$ (it exists since necessarily $\boundinf<\boundsup$).

	We operate in a symmetrical way for lower-bound parameters: $v_l$ is the minimal value of clock $\clock$ along $\rho'$ when $\clock$ is compared in a guard which was formerly $\clock\compOp \paraml$ and $\pval(u)=v_u-\epsilon$ with $v_u-\epsilon>\boundinf$, $\epsilon>0$ and $\epsilon<\boundsup+v_u$ (it exists since necessarily $\boundsup>\boundinf$).

	It is easy to see that for this valuation, $\rho'$ is a prefix of some execution in  $\Execs(\PTBP,N,\pval)$.
	Hence, the $\exists$-EF-existence problem for $\PTBP$ is equivalent to the EF problem for $\PTBP_{max}$ and thus decidable in the clique semantics (\cite{ADRST11}) and in the reconfigurable semantics (\cref{thm:reconf-EFTA}). \hfill \qed
\end{proof}


In contrast with the $\exists$-EF-existence problem, the monotonicity result is not enough to show decidability of the $\exists$-EF-universality problem for L/U-PTBP with open bounds.
In fact we can even show that the problem becomes undecidable for general L/U-PTBP in the clique semantics. 
More precisely it is undecidable for U-PTBP with one parameter with open left bound, and for L-PTBP with one unbounded parameter.

\reviewer{3 (FSTTCS)}{perhaps in para before the authors can explain why the earlier proof fails when you have open guards and then present this undecidability proof.}

\ea{a priori ce résultat est ok en open bounded ; mais pas en unbounded, puisque le cas de $\paramu = 0$ nous embête franchement}

\newcommand{\thmCliqueEFULUPTA}{The $\exists$-EF-universality problem is \undec{undecidable} for open bounded L/U-PTBP with one clock in the clique semantics.}

\begin{theorem}\label{thm:clique-EFULUPTA}
	\thmCliqueEFULUPTA{}
\end{theorem}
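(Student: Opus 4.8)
The plan is to reduce from the \emph{unboundedness problem} for deterministic two-counter machines: ``does the (unique) run of a given 2CM $\TWOCM$ from the all-zero configuration visit configurations with arbitrarily large counter values?''. This problem is undecidable, since its negation (``the run halts or is eventually periodic'') would decide the halting problem: simulate $\TWOCM$; it either halts, or repeats a configuration, or its counters grow past any a priori bound, and knowing which of these happens settles halting. Recall also that the decidability argument of \cref{lemma:EFULUPTA} for \emph{closed} bounds crucially uses that the least permissive valuation $\pval_{min}$ (here the one assigning $0$ to the single upper‑bound parameter) lies in the domain; with an open left bound this valuation is excluded, so $\exists$-EF-universality genuinely quantifies over all valuations approaching, but never reaching, that extremal one, and the construction below turns this into an unbounded quantification over counter magnitudes.

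Concretely, fix $\TWOCM$ and the domain $\bounds(\paramu)=(0,b]$ for the unique upper-bound parameter~$\paramu$. I would build a $1$-clock clique U-PTBP by reusing the skeleton of the construction of \cref{thm:reconf-PTBP} (already transferred to the clique semantics for \cref{lemma:clique-EFPTA}): a controller part tracks the current instruction, the value of a counter is recorded by the gap between the clock of the controller processes and that of the processes in the corresponding counter part, the clique topology providing the synchronisation that replaced the reconfigurable broadcasts, and every guard carrying a hidden conjunct $\clock \leq \paramu$ so that any process whose clock overtakes $\pval(\paramu)$ is frozen forever. The single new ingredient is that the counter gap is stored on a \emph{geometric} scale rather than a linear one: value $n$ of a counter is coded by the gap lying in a window whose width halves at each increment, the windows accumulating geometrically towards $0$; the ``Zeno schedule'' of halving widths is carried and updated by a dedicated process and communicated through the clique broadcasts (individual clocks only increase or reset to $0$, so no single guard of the form $\clock = 2^{-n}$ is used). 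The point of this encoding is that the number of counter values representable while all active clocks respect $\clock \leq \paramu$ is a finite capacity $\kappa(\pval(\paramu))$ that stays finite for every $\pval(\paramu)\in(0,b]$ but tends to infinity as $\pval(\paramu)\to 0^{+}$.

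The network is then wired so that: (i) an illegal step (decrementing a zero counter, losing a synchronisation, mishandling the Zeno schedule) makes the offending process's clock overtake $\pval(\paramu)$ and thus dead-ends that branch; (ii) reaching the halting instruction of $\TWOCM$ with all counters still below capacity leads to a sink, \emph{not} to $\locfinal$; and (iii) the only transition into $\locfinal$ is an ``overflow'' transition, enabled exactly when a counter is pushed past the current capacity $\kappa(\pval(\paramu))$. Then, for a fixed valuation $\pval$, $\locfinal$ is reachable in the network iff the run of $\TWOCM$ reaches some counter value exceeding $\kappa(\pval(\paramu))$ before halting. Since, by the monotonicity of \cref{lem:mono}, the set of valuations for which $\locfinal$ is reachable is upward closed in $\paramu$, the instance ``$\exists N\,\forall \pval\in\bounds\,\exists\rho\in\Execs(\PTBP,N,\pval),\ \rho\models \Diamond \locfinal$'' holds iff $\locfinal$ is reachable for $\pval(\paramu)$ arbitrarily close to $0$, that is (because $\kappa(\pval(\paramu))\to\infty$ and because a halting or eventually-periodic machine has bounded counters) iff $\TWOCM$ reaches arbitrarily large counter values, i.e.\ iff $\TWOCM$ is unbounded. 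This gives the undecidability. The dual case announced in the text --- L-PTBP with one \emph{unbounded} parameter $\paraml$ --- is obtained by the same recipe with a \emph{linear} encoding of capacity $\kappa(\pval(\paraml))\approx\pval(\paraml)$ (no Zeno gadget needed) and the roles of small and large valuations exchanged: there the worst case is $\pval(\paraml)\to\infty$, which again forces an unbounded quantification over counter sizes.

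The main obstacle is the geometric gadget together with the faithfulness proof: because clocks only increase and reset to $0$, realising windows of geometrically shrinking width, and ruling out any ``cheating'' run that reaches $\locfinal$ for a valuation where $\TWOCM$ does not overflow, requires a careful case analysis showing that every deviation from the intended behaviour freezes the responsible process before the next relevant time point (in the spirit of the remarks in the proof of \cref{thm:reconf-PTBP}). A secondary point that must be checked is that $\kappa(\pval(\paramu))$ is finite for \emph{every} valuation of the open-left domain --- otherwise a bounded machine would spuriously satisfy universality --- while still diverging along the excluded extremal valuation, which is exactly what distinguishes the open-bounded case from the closed-bounded one treated in \cref{lemma:EFULUPTA}.
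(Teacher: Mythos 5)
There is a genuine gap, and it sits exactly where you flag ``the main obstacle'': the geometric window gadget is not constructed, and there is good reason to believe it cannot be. With one clock per process, guards compare that clock only to rational constants or to the single parameter $\paramu$; the only verification primitive available in this setting (the one used in the increment gadget of \cref{thm:reconf-PTBP}) checks that a delay equals a \emph{fixed} constant or $\param$, by anchoring both endpoints of the delay to guards of the form $\clock=1$ or $\clock=\param$. Realising windows of width $2^{-n}$ for unbounded $n$ would require checking that one (arbitrary, real-valued) delay is half of another, i.e.\ comparing two unanchored delays, which is precisely what a single clock per process cannot do; your ``Zeno schedule carried by a dedicated process'' does not explain how such a comparison would be performed, and without it the whole capacity function $\kappa(\pval(\paramu))$ has no realisation. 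A secondary point you would still need to check even granting the gadget is the quantifier order $\exists N\,\forall\pval$: for an unbounded machine and a \emph{fixed} $N$, reaching $\locfinal$ for valuations with enormous capacity requires the size-$N$ network to carry the simulation arbitrarily far, so you must argue that increments recycle rather than consume processes.

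For contrast, the paper's proof needs no such gadget. It encodes counter values by the \emph{number of processes} in designated states (the clique semantics makes increment/decrement exact and lets wrong zero-test guesses be detected, each error permanently disabling at least one process), and uses the two parameters only in an ``end of simulation'' loop: the controller may broadcast $end$ only while $\clock<\paramu$ (resetting its clock), whereas an $idle$ process may enter $\locfinal$ only upon receiving $end$ with $\clock>\paraml$. Hence at least $\pval(\paraml)/\pval(\paramu)$ complete simulations must be chained, and since the open left bound on $\paramu$ makes this ratio exceed any fixed $N$, at least one of them must be error-free --- so $\locfinal$ is reachable for all valuations iff $\TWOCM$ halts. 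This sidesteps both of the difficulties above: no sub-constant timing distinctions are ever needed, and the unbounded quantity is the number of \emph{repetitions}, not the magnitude of a counter.
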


%

\begin{proof}
We reduce from the halting problem of two-counter machines.
The idea is to encode a two-counter machine, the number of processes in a particular state is used to encode the counter value.
Thanks to the clique semantics, increment and decrement of counters are easy to simulate.
However, zero tests are not possible since there is no way to distinguish between the fact that no process is modeling a counter and the fact that they just do not send a message.
We thus allow the simulation to guess whether the counter is zero or not zero non-deterministically; in case of a wrong guess we are able to detect it thanks to the clique semantics.
In this case, at least one process is stuck in an error state, we then use the timing parameter to repeat the simulation an unbounded (but finite) number of times before moving to the target state. To be able to reach the target state, we thus have to be able to correctly simulate the two-counter machine without wrong guess.

Formally, given a two-counter machine $\TWOCM = (\TwoCMInstructions,\TwoCMinstructionZero,\TwoCMinstructionAcc)$ we define a PTBP $\mathcal{P}$ as follows:
\begin{itemize}
	\item $\Loc=\{\locinit,idle,c_i,c_i^d,c_i^i,c_i^z,err,\locfinal\mid i\in\{1,2\}\}\cup \{k,k'\mid k\in K\}$ where, $\locinit$ is the initial state, $idle$ is a waiting state for the processes encoding the counters, $c_i$ is the state used to encode the value of counter~$\TwoCMCounter_i$, $c_i^i$ and $c_i^d$ are intermediary states for increment and decrement of counter $c_i$, $c_i^z$ is an intermediary state used for the zero test, a state $k$ is used to encode that the simulation reached instruction $k$ of the machine and $k'$ is an intermediary state, $err$ is a sink state used to detect error in the simulation, finally $\locfinal$ is the target state.
	\item $\Clock = \{ \clock \}$ and $\Param=\{\paramu,\paraml\}$
	\item $\Alphabet=\{inc_i,dec_i,z_i,nz_i,ok,end\mid i\in\{1,2\}\}$ where $inc_i$, $dec_i$, $z_i$, and $nz_i$ stand respectively for increment, decrement, zero, and not zero of counter~$c_i$, $ok$ is a message to acknowledge that the action was performed correctly, and $end$ is the message sent at the end of the simulation to either restart a simulation or reach the target state.
	\item $\Edges$ is defined as follows (for simplicity the guard and update of the clock are omitted when trivial, \ie{} the true guard and no reset):
	\begin{description}
	\item[Initialization.] $(\locinit,!!ok,k_0)\in \Edges$, $(\locinit,??ok,idle)\in\Edges$.
	\item[Increment of counter $i$.] For an increment instruction $\TwoCMinstruction: incr~\TwoCMCounter_i~goto~\TwoCMinstruction_1$, we add to $\Edges$ the transitions: $(k,!!inc_i,k')$, $(k',??ok,k_1)$, $(idle,??inc_i,c_i^i)$, $(c_i^i,!!ok,c_i)$ $(c_i^i,??ok,idle)$.
	\item[Decrement of counter $i$.] For a decrement instruction $\TwoCMinstruction: decr~c_i~goto~\TwoCMinstruction_1$, we add to $\Edges$ the transitions: $(k,!!dec_i,k')$, $(k',??ok,k_1)$, $(c_i,??dec_i,c_i^d)$, $(c_i^d,!!ok,idle)$ $(c_i^d,??ok,c_i)$.
	\item[Zero-test of counter $i$.] For a zero-test instruction $\TwoCMinstruction: if~c_i=0~goto~\TwoCMinstruction_z~\linebreak[0]else~goto~\TwoCMinstruction_{nz}$, we add to $\Edges$ the transitions: $(k,!!z_i,k_z)$, $(k,!!nz_i,k')$, $(k',??ok,k_{nz})$, $(c_i,??z_i,err)$, $(c_i,??nz_i,c_i^z)$ $(c_i^z,!!ok,c_i)$, $(c_i^z,??ok,c_i)$.
	\item[End of simulation.]\hspace{1cm}\\ $(k_{acc},\clock<\paramu,!!end,\{\clock:=0\},k_0)$ $(idle,\clock>\paraml,??end,\locfinal)$ $(c_i,??end,idle)$.
	\end{description}
	\end{itemize}
	Given a configuration $\config$ of the network, we say that it encodes a configuration $(k,v_1,v_2)$ of the two-counter machine if there is one process in state $k$ and $v_i$ processes in states $c_i$ for $i\in\{1,2\}$.
	If we omit the \emph{end of simulation} part, this reduction is similar to the one found in  \cite[Chapter~III, Theorem~3.5]{fournier2015parameterized};
	we therefore proceed with less details on this part.
	In short, every execution of the network is of one of the three kinds:
	\begin{description}
		\item[Correct simulation.] The execution correctly encodes the run of the two-counter machine.
		\item[Lack of processes.] The controller is stuck in an intermediary state while performing an increment, \emph{i.e} there was no process left in the $idle$ state when the controller sent the $inc_i$ message, thus it is stuck waiting for an $ok$ message that no one can send.
		\item[Wrong zero-test.] Along the execution, the controller wrongly assumed the value of a counter. Either it guessed a non-zero value and it is stuck waiting for an $ok$ message, or it guessed zero when it was not---in which case at least one process moved to the error state.
\end{description}

\todo{CONCUR 2018: ``Even if the exact complexity is unknown, you should mention any known upper and lower bounds. ''}

Notice now that to reach the target state $\locfinal$ a process in $idle$ must receive the message $end$ after its clock value is greater than parameter $\paraml$.
But the end of simulation part requires that the controller clock is lower than parameter~$\paramu$.
Thus when reaching state $k_{acc}$, in order to be able to let more time elapse, the controller has to send the message $end$ which leads to a configuration where there is no process in the counter states and the controller is in the initial state of the two-counter machine. This configuration thus encodes the initial configuration of the two-counter machine. The controller then must simulate another time the two-counter machine before being able to send $end$ again.

Thus, given a valuation $\pval$ of the parameters, to reach $\locfinal$ at least $\pval(\paraml)/\pval(\paramu)$ messages $end$ must be sent by the controller. In other words, $\pval(\paraml)/\pval(\paramu)$ (correct or incorrect) simulations of the two-counter machine must be performed before reaching $\locfinal$. We have seen before that every incorrect simulation either gets stuck, or sends at least a process in the error state. Hence, given a network size~$N$, if for a valuation $\pval$ such that $\pval(\paraml)/\pval(\paramu)>N$ the state $\locfinal$ is reached, then at least one simulation was correct, thus the two-counter machine halts.

This proves the undecidability of the EF-universality problem with~0 as an open lower bound for $\paramu$.
Indeed, if there exists a network of size~$N$ which satisfies the EF-universality, then it is possible to reach $\locfinal$ for all valuation and in particular for a valuation such that $\pval(\paraml)/\pval(\paramu)>N$.
For the other direction, if the machine halts, there exists a size of network ($m+2$ where $m$ is the maximal sum of the two-counter value along the execution) that ensures that $\locfinal$ is reachable for any valuation $\pval$ with $\pval(\paramu)>0$.
Indeed, the controller can simulate the two-counter machine correctly (since it has enough processes to model the counters) in 0 time unit, wait a positive delay but less than $\paramu$ time unit, and repeat this until the clock value of the processes in $idle$ is greater than $\paraml$.
This is possible since every time the controller sends the message $end$ the configuration obtained is the same as the one obtained after the initialization (the first message $ok$). \hfill \qed
\end{proof}

\newcommand{\thmEFunivonelereste}{%
	$\exists$-EF-universality in the clique semantics is \undec{undecidable} already with a single clock for 
		U-PTBP with open bounds on the left, and  L-PTBP with infinity as right bound.
}

\begin{lemma}\label{EFuniv:1:lereste}
	\thmEFunivonelereste{}
\end{lemma}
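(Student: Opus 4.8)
The plan is to adapt the construction used in the proof of \cref{thm:clique-EFULUPTA}, where the reduction already relies on only two parameters: one upper-bound parameter $\paramu$ (appearing only in the guard $\clock < \paramu$ of the controller's \emph{end of simulation} edge) and one lower-bound parameter $\paraml$ (appearing only in the guard $\clock > \paraml$ of the $idle$-process edge receiving $end$). The key observation is that the argument in that proof does not require the presence of \emph{both} parameters simultaneously; it only uses the ratio $\pval(\paraml)/\pval(\paramu)$ to force arbitrarily many simulation rounds. So the strategy is to show that each parameter can be kept on its own while the other is replaced by a suitable constant, and that the construction still needs only one clock.

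First I would handle the U-PTBP case. Replace the lower-bound parameter $\paraml$ by a fixed constant, say $1$, so the $idle$-to-$\locfinal$ edge becomes $(idle, \clock > 1, ??end, \locfinal)$, while keeping $\paramu$ with an open bound on the left, i.e.\ $\bounds(\paramu) = (0, B]$ for some bound $B$. The number of required $end$-messages before $\locfinal$ becomes reachable is then at least $1/\pval(\paramu)$, which tends to infinity as $\pval(\paramu) \to 0^+$. The remainder of the argument transfers verbatim: if some network of size $N$ satisfies EF-universality, then in particular for $\pval(\paramu)$ small enough that $1/\pval(\paramu) > N$ the state $\locfinal$ must be reached, forcing a correct simulation and hence halting of the two-counter machine; conversely, if the machine halts with maximal counter-sum $m$, then a network of size $m+2$ reaches $\locfinal$ for every $\pval(\paramu) > 0$ by running a correct $0$-time simulation, waiting a small positive delay $< \pval(\paramu)$, and repeating until the $idle$ clocks exceed $1$. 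The clock count stays at one, and $\paramu$ appears only as an upper bound, so this is a genuine U-PTBP.

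For the L-PTBP case, symmetrically, replace $\paramu$ by a fixed constant $1$, so the controller's \emph{end of simulation} edge becomes $(k_{acc}, \clock < 1, !!end, \{\clock := 0\}, k_0)$, and keep $\paraml$ unbounded, i.e.\ $\bounds(\paraml) = [0, +\infty)$ (infinity as right bound). Now the number of required rounds is at least $\pval(\paraml)/1 = \pval(\paraml)$, which can be made larger than any fixed network size $N$ by choosing $\pval(\paraml)$ large. The same two directions go through: EF-universality for size $N$ forces, for $\pval(\paraml) > N$, a correct simulation; conversely halting with counter-sum bound $m$ gives a size-$(m{+}2)$ network reaching $\locfinal$ for every $\pval(\paraml)$, since the controller can perform arbitrarily many correct $0$-time simulations interleaved with positive delays $< 1$. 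Again only one clock is used and $\paraml$ occurs only as a lower bound.

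The main obstacle — or really the only delicate point — is making sure that fixing one parameter to a constant does not trivialize the forcing mechanism, in particular that the chosen constant ($1$ for the eliminated parameter) is compatible with the timing guesses made inside the simulation gadgets (the $z_i$/$nz_i$/$ok$ handshakes), which in the construction of \cref{thm:clique-EFULUPTA} are all performed in zero elapsed time and therefore impose no constraint between simulation rounds. I would check that detail explicitly and note that, since the simulation itself uses no clock constraints beyond the trivial guard, replacing either parameter by any positive constant leaves the per-round dynamics unchanged; the only quantity that matters is how many times the controller must cycle through $k_{acc}$ before an $idle$ process can receive $end$, and that is governed solely by the surviving parameter. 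This yields \cref{EFuniv:1:lereste}. \qed
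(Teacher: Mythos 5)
Your proposal is correct and follows essentially the same route as the paper: in both cases the eliminated parameter ($\paraml$ for U-PTBP, $\paramu$ for L-PTBP) is replaced by the constant $1$ in the construction of \cref{thm:clique-EFULUPTA}, and the argument reduces to the observation that the surviving parameter alone still lets one force more than $N$ simulation rounds for any network size $N$. The extra care you take about the per-round dynamics being unaffected is a reasonable (if brief in the paper) sanity check, but it does not change the argument.
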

\begin{proof}
	The proof of \cref{thm:clique-EFULUPTA} uses an open bounded L/U-PTBP.
	Moreover we only used the fact for all size of network $N$ there exists a valuation of the parameter $p$ such that $\pval(\paraml)/\pval(\paramu)>N$.
	Thus the proof can be adapted with only one upper-bound parameter $\paramu$ (resp.\ lower-bound parameter $\paraml$) by replacing $\paraml$ by $1$ in the protocol (resp.~$\paramu$ by~$1$).
	This still ensures that there exists a valuation such that $1/\pval(\paramu)>N$ (resp.\ $\pval(\paraml)>N$).  \hfill \qed
\end{proof}

\ea{il reste mine de rien quelques cas ouverts selon la façon dont les bornes sont ouvertes ou fermées, notamment :
\begin{itemize}
	\item unbounded L/U
	\item unbounded U
	\item open bounded L
\end{itemize}

}

\section{Conclusion}\label{section:conclusion}


\newcommand\crefThmabbr[1]{Th\ref{#1}}
\newcommand\crefLemmaabbr[1]{L\ref{#1}}

\begin{table}[t]
	\footnotesize
	\setlength{\tabcolsep}{1pt} 
	\begin{subfigure}[b]{0.59\textwidth}
		\begin{tabular}{|l|c|c|c|c|c|c|c|c|}\hline
			& 1-c
			& 2-c
			& 3-c
			& \multicolumn{2}{c|}{1-L/U}
			& 2-L/U
			& 3-L/U
			\\
			& 
			& 
			& 
			&cb 
			&ob
			&
			&
			\\ \hline
			$\exists$-EF-empt.
			& \multicolumn{3}{c|}{\cellundec{\crefThmabbr{thm:reconf-PTBP}}}
			& \multicolumn{2}{c|}{\celldec{\crefLemmaabbr{lemma:EFLUPTA}}}
			& \multicolumn{2}{c|}{\cellundec{\crefLemmaabbr{lemma:EF:undec:2}}}
			\\ \hline
			$\exists$-EF-univ.
			& \multicolumn{3}{c|}{\cellundec{\crefThmabbr{thm:reconf-PTBP}}}
			& \celldec{\crefLemmaabbr{lemma:EFULUPTA}} 
			& \multicolumn{1}{c|}{\cellopen{open}}
			& \multicolumn{2}{c|}{\cellundec{\crefLemmaabbr{lemma:EF:undec:2}}}
			\\ \hline
			$\exists$-AF
			& \celldec{\crefThmabbr{thm:reconf-AF}}
			& \cellopen{open}
			& \cellundec{\crefThmabbr{thm:reconf-AF}}
			& \multicolumn{2}{c|}{\celldec{\crefThmabbr{thm:reconf-AF}}}
			& \cellopen{open}
			& \cellundec{\crefThmabbr{thm:reconf-AF}}
			\\\hline
		\end{tabular}
		\caption{Reconfigurable semantics}
		\label{table:summary:reconfigurable}
	\end{subfigure}
	\begin{subfigure}[b]{0.45\textwidth}
		\begin{tabular}{|l|c|c|c|c|c|c|c|c|c|}\hline
			& PTBP
			& \multicolumn{2}{c|}{L/U}
			& \multicolumn{2}{c|}{L or U}
			\\
			& 
			&cb 
			&ob
			&cb 
			&ob
				\\ \hline
			$\exists$-EF-empt.
			& \cellundec{\crefLemmaabbr{lemma:clique-EFPTA}}
			& \multicolumn{2}{c|}{\celldec{\crefLemmaabbr{lemma:EFLUPTA}} }
			& \multicolumn{2}{c|}{\celldec{\crefLemmaabbr{lemma:EFLUPTA}} }
			\\ \hline
			$\exists$-EF-univ.
			& \cellundec{\crefLemmaabbr{lemma:clique-EFUPTA}}
			& \celldec{\crefLemmaabbr{lemma:EFULUPTA}}
			& \cellundec{\crefThmabbr{thm:clique-EFULUPTA}}
			& \celldec{\crefLemmaabbr{lemma:EFULUPTA}}
			& \cellundec{\crefLemmaabbr{EFuniv:1:lereste}}
			\\ \hline
			$\exists$-AF
			& \multicolumn{5}{c|}{\cellundec{\crefThmabbr{thm:clique-AFA}}}
			\\ \hline
			
		\end{tabular}
		\caption{Clique semantics for 1 clock} 
		\label{table:summary:clique}
	\end{subfigure}
	\caption{Summary of our contributions (bold green: decidable; red italic: undecidable)}
	\label{table:summary}
\end{table}

Up to our knowledge this work is the first to consider two different sets of parameters at the same time.
Both parameterized number of processes and parametric clocks are difficult to deal with and number of problems are undecidable for each of these systems.
However we have shown that the combination of the decidable subclasses leads to some decidable problems.
Our contributions are summarized in \cref{table:summary}; $i$-c (resp.~$i$-L/U) denotes PTBP (resp.\ L/U-PTBP) with $i$ clocks per process.
In \cref{table:summary:clique}, cb and ob denote formalisms with a closed bounded parameter domain and an open bounded parameter domain. 

The open 2-clock case in the reconfigurable semantics is a well-known open problem, with connections to open problems of logic and automata theory~\cite{AHV93}.
The other open case in \cref{table:summary} we are interested in solving is $\exists$-EF-universality for 1-L/U-PTBP in the reconfigurable semantics with open bounds.
In addition, EF problems are still open for bounded 1-clock PTBP (\cref{thm:reconf-PTBP} requires unbounded parameters), and for 1-c L/U with unbounded parameters in the clique semantics.
Finally, for the decidable subclasses we exhibited, it remains to be studied whether exact synthesis can be achieved, \ie{} obtaining the set of sizes of processes and timing parameter valuations for which EF or AF holds.

Another future work is to consider the EF- and AF-emptiness problems; recall that, in contrast to formalisms with a network of size~1 (\ie{} PTAs), the emptiness problem is not equivalent to the existence problem, due to the additional quantifier over the network size.

More general future works include considering other semantics such as asynchronous broadcast or different communication topologies (reconfigurable under constraint, restricted to graph of bounded width, \dots), as well as the reachability problem for \emph{all} sizes of networks (instead of the \emph{existence} of a network size):
	while it seems straightforward for EF problems, it remains to be done for AF problems.
Another quantifier of interest is the number of processes reaching the target: so far, we considered the existence of one process reaching the target.
All processes reaching the target is also of interest.

%
%
%

\medskip
\textbf{Acknowledgement} The authors warmly thank Nathalie Bertrand for fruitful discussions on the topic of this paper.


\newcommand{\JLAP}{Journal of Logic and Algebraic Programming}
\newcommand{\LNCS}{Lecture Notes in Computer Science}
\newcommand{\STTT}{International Journal on Software Tools for Technology Transfer}
\newcommand{\TCS}{Theoretical Computer Science}
\newcommand{\IEEETSE}{IEEE Transactions on Software Engineering}

\ifdefined\VersionAuthor
	\bibliographystyle{alpha} 
\else
\bibliographystyle{splncs04} 
\fi
\bibliography{bib}

\newcommand{\etalchar}[1]{$^{#1}$}
\begin{thebibliography}{ADR{\etalchar{+}}16}

\bibitem[AD94]{AD94}
Rajeev Alur and David~L. Dill.
\newblock A theory of timed automata.
\newblock {\em \TCS{}}, 126(2):183--235, 1994.

\bibitem[ADM04]{ADM04}
Parosh~Aziz Abdulla, Johann Deneux, and Pritha Mahata.
\newblock Multi-clock timed networks.
\newblock In {\em {LiCS}}, pages 345--354. {IEEE} Computer Society, 2004.

\bibitem[ADR{\etalchar{+}}11]{ADRST11}
Parosh~A. Abdulla, Giorgio Delzanno, Othmane Rezine, Arnaud Sangnier, and
  Riccardo Traverso.
\newblock On the verification of timed ad hoc networks.
\newblock In Uli Fahrenberg and Stavros Tripakis, editors, {\em {FORMATS}},
  volume 6919 of {\em \LNCS{}}, pages 256--270. Springer, 2011.

\bibitem[ADR{\etalchar{+}}16]{ADRST16}
Parosh~A. Abdulla, Giorgio Delzanno, Othmane Rezine, Arnaud Sangnier, and
  Riccardo Traverso.
\newblock Parameterized verification of time-sensitive models of ad hoc network
  protocols.
\newblock {\em \TCS{}}, 612:1--22, 2016.

\bibitem[AHV93]{AHV93}
Rajeev Alur, Thomas~A. Henzinger, and Moshe~Y. Vardi.
\newblock Parametric real-time reasoning.
\newblock In S.~Rao Kosaraju, David~S. Johnson, and Alok Aggarwal, editors,
  {\em STOC}, pages 592--601. ACM, 1993.

\bibitem[AJ03]{AJ03}
Parosh~Aziz Abdulla and Bengt Jonsson.
\newblock Model checking of systems with many identical timed processes.
\newblock {\em \TCS{}}, 290(1):241--264, 2003.

\bibitem[AKPP16]{AKPP16}
{\'E}tienne Andr{\'e}, Micha\l{} Knapik, Wojciech Penczek, and Laure Petrucci.
\newblock Controlling actions and time in parametric timed automata.
\newblock In Jörg Desel and Alex Yakovlev, editors, {\em ACSD}, pages 45--54.
  {IEEE} Computer Society, 2016.

\bibitem[AL17]{ALime17}
{\'E}tienne Andr{\'e} and Didier Lime.
\newblock Liveness in {L/U}-parametric timed automata.
\newblock In Alex Legay and Klaus Schneider, editors, {\em {ACSD}}, pages
  9--18. IEEE, 2017.

\bibitem[AM15]{AM15}
{\'E}tienne Andr{\'e} and Nicolas Markey.
\newblock Language preservation problems in parametric timed automata.
\newblock In Sriram Sankaranarayanan and Enrico Vicario, editors, {\em
  {FORMATS}}, volume 9268 of {\em \LNCS{}}, pages 27--43. Springer, 2015.

\bibitem[And18]{Andre18STTT}
{\'E}tienne Andr{\'e}.
\newblock What's decidable about parametric timed automata?
\newblock {\em International Journal on Software Tools for Technology
  Transfer}, 2018.
\newblock To appear.

\bibitem[BBLS15]{BBLS15}
Nikola Bene\v{s}, Peter Bezd\v{e}k, Kim~G. Larsen, and Ji\v{r}\'i Srba.
\newblock Language emptiness of continuous-time parametric timed automata.
\newblock In Magnús~M. Halldórsson, Kazuo Iwama, Naoki Kobayashi, and Bettina
  Speckmann, editors, {\em ICALP, Part II}, volume 9135 of {\em \LNCS{}}, pages
  69--81. Springer, 2015.

\bibitem[BJNT00]{BJNT00}
Ahmed Bouajjani, Bengt Jonsson, Marcus Nilsson, and Tayssir Touili.
\newblock Regular model checking.
\newblock In E.~Allen Emerson and A.~Prasad Sistla, editors, {\em {CAV}},
  volume 1855 of {\em \LNCS{}}, pages 403--418. Springer, 2000.

\bibitem[BKL08]{baier2008principles}
Christel Baier, Joost-Pieter Katoen, and Kim~G. Larsen.
\newblock {\em Principles of model checking}.
\newblock MIT press, 2008.

\bibitem[DG04]{delzanno2004automatic}
Giorgio Delzanno and Pierre Ganty.
\newblock Automatic verification of time sensitive cryptographic protocols.
\newblock In {\em {TACAS}}, pages 342--356. Springer, 2004.

\bibitem[DKRT97]{DKRT97}
Pedro~R. D'Argenio, Joost-Pieter Katoen, Theo~C. Ruys, and Jan Tretmans.
\newblock The bounded retransmission protocol must be on time!
\newblock In Ed~Brinksma, editor, {\em TACAS}, volume 1217 of {\em \LNCS{}},
  pages 416--431. Springer, 1997.

\bibitem[DSTZ12]{DSTZ-fsttcs12}
Giorgio Delzanno, Arnaud Sangnier, Riccardo Traverso, and Gianluigi Zavattaro.
\newblock On the complexity of parameterized reachability in reconfigurable
  broadcast networks.
\newblock In {\em FSTTCS}, volume~18 of {\em LIPIcs}, pages 289--300. Schloss
  Dagstuhl - Leibniz-Zentrum fuer Informatik, 2012.

\bibitem[DSZ10]{delzanno2010parameterized}
Giorgio Delzanno, Arnaud Sangnier, and Gianluigi Zavattaro.
\newblock Parameterized verification of ad hoc networks.
\newblock In {\em International Conference on Concurrency Theory}, pages
  313--327. Springer, 2010.

\bibitem[DSZ11a]{delzanno2011power}
Giorgio Delzanno, Arnaud Sangnier, and Gianluigi Zavattaro.
\newblock On the power of cliques in the parameterized verification of ad hoc
  networks.
\newblock In {\em FoSSaCS}, volume~11, pages 441--455. Springer, 2011.

\bibitem[DSZ11b]{delzanno2011parameterized}
Giorgio Delzanno, Arnaud Sangnier, and Gianluigi Zavattaro.
\newblock Parameterized verification of safety properties in ad hoc network
  protocols.
\newblock {\em arXiv preprint arXiv:1108.1864}, 2011.

\bibitem[Fou15]{fournier2015parameterized}
Paulin Fournier.
\newblock {\em Parameterized verification of networks of many identical
  processes}.
\newblock PhD thesis, Rennes 1, 2015.

\bibitem[HRSV02]{HRSV02}
Thomas Hune, Judi Romijn, Mari{\"e}lle Stoelinga, and Frits~W. Vaandrager.
\newblock Linear parametric model checking of timed automata.
\newblock {\em Journal of Logic and Algebraic Programming}, 52-53:183--220,
  2002.

\bibitem[JLR15]{JLR15}
Aleksandra Jovanovi\'c, Didier Lime, and Olivier~H. Roux.
\newblock Integer parameter synthesis for timed automata.
\newblock {\em \IEEETSE{}}, 41(5):445--461, 2015.

\bibitem[LSLD15]{LSLD15}
Li~Li, Jun Sun, Yang Liu, and Jin~Song Dong.
\newblock Verifying parameterized timed security protocols.
\newblock In Nikolaj Bj{\o}rner and Frank~S. de~Boer, editors, {\em FM}, volume
  9109 of {\em \LNCS{}}, pages 342--359. Springer, 2015.

\bibitem[Mil00]{Miller00}
Joseph~S. Miller.
\newblock Decidability and complexity results for timed automata and
  semi-linear hybrid automata.
\newblock In Nancy~A. Lynch and Bruce~H. Krogh, editors, {\em HSCC}, volume
  1790 of {\em \LNCS{}}, pages 296--309. Springer, 2000.

\bibitem[Min67]{minsky1967computation}
Marvin~L. Minsky.
\newblock {\em Computation: finite and infinite machines}.
\newblock Prentice-Hall, Inc., 1967.

\end{thebibliography}


\LongVersion{
\newpage
\appendix

\section{Proof of \cref{thm:reconf-AF}}\label{proof:thm:reconf-AF}

\recallResult{thm:reconf-AF}{\thmReconfAF{}}

\begin{proof}
	We first show that in the reconfigurable semantics the $\exists$-AF problems are equivalent to the same problems but in networks of size one.

	The easiest direction is to assume that an AF property holds in a network of size one, it thus answers the original parameterized question which asks whether there exists a size of network satisfying the property.
	The other direction is more subtle and derives from the fact that in the reconfigurable semantics nothing prevents messages to be sent to an empty set of receivers.
	In the case of executions where there is no communication (every message is sent with an empty set of receivers) the network behaves as several processes running in parallel without interaction. Thus, if the AF problem is satisfied for some size of networks, it means that the target state is reached in particular for every execution without communication. Thus, it follows that the target is reached for all executions with a single process hence it holds for a network of size 1.

	The rest of the theorem follows from the literature.
	AF-emptiness and AF-universality are decidable for 1-PTA. Indeed it is shown in~\cite{AM15} that one can abstract 1-PTA semantics into a finite parameterized zone abstraction. Moreover one can use this abstraction to solve the AF problems as shown in  \cite{JLR15}.
	The undecidable cases directly come from the fact that the AF-emptiness is undecidable for (L/U-)PTA with 3 clocks or more~\cite{JLR15}.
	The decidable case comes from the following result for PTA with 1 clock, which was, to the best of our knowledge, never shown formally:

	\ea{AF-emptiness and universality is decidable for TA and PTA
	donc \dec{decidable} pour TA et PTA1 (pour ce dernier cas,), \undec{undec} pour PTA3 \cite{JLR15} et L/U-PTA, et ouvert pour L-PTA, U-PTA et PTA-2.}

	\begin{lemma}[Decidability of AF-emptiness for 1-clock PTA]
		The AF-emptiness problem is \dec{decidable} for 1-clock PTA.
	\end{lemma}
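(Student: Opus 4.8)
The plan is to reduce AF-emptiness for a 1-clock PTA $\mathcal{A}$ with target $\locfinal$ to a satisfiability question in linear arithmetic over $\grandq$, built on top of the finite parametric zone abstraction of~\cite{AM15}. Recall that AF-emptiness asks whether the set $\Lambda = \{\pval : \mathcal{A}[\pval] \models \mathrm{AF}\,\locfinal\}$ is empty, i.e.\ whether \emph{no} parameter valuation makes $\locfinal$ unavoidable. The first observation is that $\mathcal{A}[\pval]\models\mathrm{AF}\,\locfinal$ fails exactly when $\mathcal{A}[\pval]$ has a maximal run that never visits $\locfinal$, and such a run is either (i) a finite run whose last configuration is a deadlock of $\mathcal{A}[\pval]$ and whose prefix stays outside $\locfinal$, or (ii) an infinite run that stays outside $\locfinal$, i.e.\ an infinite run of the sub-automaton $\mathcal{A}_{\neg f}$ obtained by deleting $\locfinal$ and its incident edges. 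So $\pval\in\Lambda$ iff $\pval$ rules out both (i) and (ii), and deciding AF-emptiness amounts to deciding whether the set of such $\pval$ is empty.

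Next I would invoke~\cite{AM15}: for a 1-clock PTA one can effectively compute a \emph{finite} parametric zone graph whose nodes are pairs $(\loc,Z)$ with $Z$ a polyhedron over the clock $\clock$ and the parameters~$\Param$, such that for every $\pval$ the instantiation at $\pval$ (keep the nodes whose polyhedron is nonempty after setting $\Param=\pval$, keep the induced edges) is a sound and complete zone graph for the non-parametric timed automaton $\mathcal{A}[\pval]$. Applying this to $\mathcal{A}_{\neg f}$ gives a fixed finite graph $G$, and following~\cite{JLR15}: for a fixed $\pval$, condition (ii) holds iff $G[\pval]$ has a cycle reachable from its initial node, while condition (i) holds iff $G[\pval]$ has a reachable node $(\loc,Z)$ and a clock value $\clock$ with $(\clock,\pval)\in Z$ such that $(\loc,\clock)$ enables \emph{no} edge of $\mathcal{A}$ (one must use $\mathcal{A}$, not $\mathcal{A}_{\neg f}$: an edge into $\locfinal$ provides a successor, hence no deadlock of the bad kind).

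The step doing the real work is turning these two conditions into formulas over $\pval$. Since $G$ is a fixed finite graph and the symbolic successor operation commutes with instantiation, ``node $(\loc,Z)$ is present (equivalently, reachable) in $G[\pval]$'' is equivalent to ``$\pval$ belongs to the projection of $Z$ onto $\Param$'', a finite conjunction of linear inequalities over~$\pval$; hence ``$G[\pval]$ has a reachable cycle'' is a finite disjunction, over pairs consisting of a reachable node and a cycle through it in $G$, of such conjunctions. For condition (i), since there is a single clock, ``$(\loc,\clock)$ enables no edge of $\mathcal{A}$'' unwinds into a Boolean combination of comparisons between $\clock$ and the (constant or parametric) guard bounds of the finitely many edges leaving~$\loc$; conjoining with $(\clock,\pval)\in Z$ and eliminating~$\clock$ by quantifier elimination over $(\grandq,+,<)$ again yields a linear formula over~$\pval$. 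Therefore $\Lambda$ is the complement of a finite union of linear-arithmetic-definable sets, hence itself definable by a formula $\varphi(\pval)$ of linear arithmetic over $\grandq$, and testing $\varphi$ for satisfiability (by Fourier--Motzkin elimination, or any decision procedure for the first-order theory of $(\grandq,+,<)$) decides AF-emptiness.

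The main obstacle I anticipate is the second step: stating precisely and justifying that the parametric zone graph of~\cite{AM15}, once instantiated at an arbitrary~$\pval$, faithfully captures \emph{both} reachability and the existence of infinite runs of $\mathcal{A}[\pval]$ --- the latter relying on the standard but non-trivial fact that a reachable cycle in a timed-automaton zone graph is realized by a concrete infinite run --- together with the correct bookkeeping of maximal runs relative to~$\mathcal{A}$ rather than~$\mathcal{A}_{\neg f}$. Once that foundation is in place, the remainder is routine manipulation of a fixed finite graph decorated with parametric polyhedra.
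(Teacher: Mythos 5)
Your proposal is correct and follows essentially the same route as the paper's own proof, which likewise combines the finite parametric zone graph of~\cite{AM15} for 1-clock PTA with the symbolic AF algorithm of~\cite{JLR15}. You simply spell out in more detail what the paper delegates to those two citations (the decomposition of a failing AF into deadlocked or infinite runs avoiding $\locfinal$, and the translation into linear arithmetic over the parameters), and you correctly identify the realizability of abstract cycles as the one point needing care.
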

	\begin{proof}
	\pf{Bof}
	A classical way to approach timed automaton is to abstract the reachable configurations into a finite region graph (see \emph{e.g.} \cite{baier2008principles}). This region graph can be adapted in the case of PTA in order to deal with parameters. It was shown in \cite{AM15} that a similar abstraction called zone graph is finite for 1-clock PTA. To obtain the decidability of the AF-emptiness problem it then suffices to apply the symbolic algorithm given in \cite{JLR15}. \hfill \qed
	\end{proof} 
	
	This concludes the proof of \cref{thm:reconf-AF}. \hfill \qed


\end{proof} 
	Notice that these questions are still open for L-PTA, U-PTA and 2-PTA.

\begin{remark}\label{remark:reconf-AFopen}
  Following the reasoning in the proof of \cref{thm:reconf-AF}, the $\exists$-AF-emptiness problem is open in the same cases as for PTA:
	for L-PTBP and U-PTBP, and for (L/U)-PTBP with 2 clocks.
\end{remark}
\section{Two-counter machines}\label{appendix:2CM}
For completeness, we recall here the definition of two-counter machines as well as the undecidability of the halting problem. 

\begin{definition}[\cite{minsky1967computation}]
	A two-counter machine is a tuple $\TWOCM = (\TwoCMInstructions,\TwoCMinstructionZero,\TwoCMinstructionAcc)$ manipulating integer variables $\TwoCMCounter_1$ and $\TwoCMCounter_2$ called counters and composed of a finite set of instructions~$\TwoCMInstructions$.
	Each instruction $\TwoCMinstruction\in \TwoCMInstructions$ is either of the form:
	\begin{description}
			\item[Increment] $\TwoCMinstruction:inc~\TwoCMCounter_i;~goto~\TwoCMinstruction'$, or
			\item[Decrement] $\TwoCMinstruction:decr~\TwoCMCounter_i;~goto~\TwoCMinstruction'$, or
			\item[Zero test] $\TwoCMinstruction:if~\TwoCMCounter_i=0~goto~\TwoCMinstruction'~else~goto~\TwoCMinstruction''$
	\end{description}
	where $i\in\{1,2\}$ and $\TwoCMinstruction,\TwoCMinstruction',\TwoCMinstruction''$ are labels preceding instructions. $\TwoCMinstructionZero$ is the initial label and $\TwoCMinstructionAcc$ is the accepting label.
\end{definition}
A configuration is a tuple of $\TwoCMInstructions\times\grandn\times\grandn$.
A configuration $(\TwoCMinstruction,\TwoCMCounterValue_1,\TwoCMCounterValue_2)$ means that the machine is at instruction $\TwoCMinstruction$ with counter $\TwoCMCounter_1$ with value $\TwoCMCounterValue_1$ and counter $\TwoCMCounter_2$ with value $\TwoCMCounterValue_2$.

A two-counter machine gives rise to a run $\TwoCMconfig_0\to\TwoCMconfig_1\to\dots$ where $\TwoCMconfig_0=(\TwoCMinstruction,0,0)$ and for all $\TwoCMconfig_i=(\TwoCMinstruction,\TwoCMCounterValue_1,\TwoCMCounterValue_2)$ the successor configuration depends on the form of~$\TwoCMinstruction$.
\begin{itemize}
	\item if $\TwoCMinstruction:inc~\TwoCMCounter_1;~goto~\TwoCMinstruction'$ then $\TwoCMconfig_{i+1}=(\TwoCMinstruction',\TwoCMCounterValue_1+1,\TwoCMCounterValue_2)$ (similarly for an increment of $\TwoCMCounter_2$)
	\item if $\TwoCMinstruction:decr~\TwoCMCounter_1;~goto~\TwoCMinstruction'$ then $\TwoCMconfig_{i+1}=(\TwoCMinstruction',\TwoCMCounterValue_1-1,\TwoCMCounterValue_2)$ (similarly for an decrement of $\TwoCMCounter_2$)
	\item if $\TwoCMinstruction:if~\TwoCMCounter_1=0~goto~\TwoCMinstruction'~else~goto~\TwoCMinstruction''$  and $\TwoCMCounterValue_1=0$ then $\TwoCMconfig_{i+1}=(\TwoCMinstruction',0,\TwoCMCounterValue_2)$ (similarly for $\TwoCMCounter_2$)
	\item $\TwoCMinstruction:if~\TwoCMCounter_1 = 0~goto~\TwoCMinstruction'~else~goto~\TwoCMinstruction''$  and $\TwoCMCounterValue_1>0$ then $\TwoCMconfig_{i+1}=(\TwoCMinstruction'',\TwoCMCounterValue_1,\TwoCMCounterValue_2)$ (similarly for $\TwoCMCounter_2$)
\end{itemize}
 
We assume without loss of generality that the run is either infinite or the machine halt in~$\TwoCMinstructionAcc$. Moreover we can assume without loss of generality that the two last instructions are necessarily zero test of both counter.
The halting problem asks whether the machine halts, and the boundedness problem asks whether the counter values are bounded along the run.
Both problems have been shown undecidable in~\cite{minsky1967computation}.

\section{Additional details on the proof of \cref{thm:reconf-PTBP}}\label{appendix:proof:thm:reconf-PTBP-details}

\recallResult{thm:reconf-PTBP}{\thmReconfPTBP{}}

\subsection{Gadget constraining integer valuations}\label{appendix:proof:thm:reconf-PTBP-details:integer}

The gadget is given in \cref{fig:gadget}.

\begin{figure}[h!]
\begin{tikzpicture}[node distance=3cm]
\node[draw,circle] (0) {$0$};
\node[draw,circle] (1)[above right of=0] {$1$};
\node[draw,circle] (2)[right of=0] {$2$};
\node[draw,circle] (q0)[right of=2] {$q_0$};
\node (ni)[below right of=2] {not integer};

\path[->,above]
(0) edge node {$\epsilon$} (1)
(0) edge node {$\epsilon$} (2)
(1) edge [loop right] node {$x=\param,!!now$} (1)
(2) edge[loop above] node {$x=1,\epsilon,\{x:=0\}$} (2)
(2) edge node {$x=0,??now$} (q0)
(2) edge node [right]{$x>0\wedge x<1,??now$} (ni)

;

\end{tikzpicture}
\caption{Gadget to enforce integer values of $\param$}
\label{fig:gadget}
\end{figure}
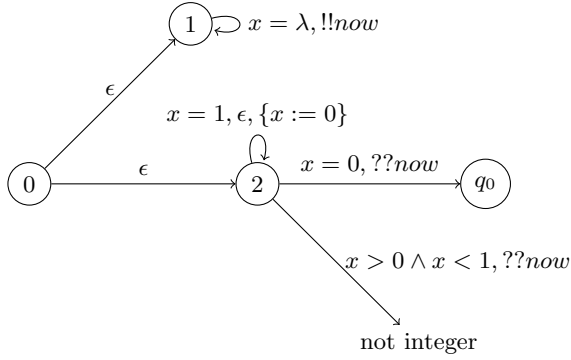

\subsection{Additional examples of executions}\label{appendix:proof:thm:reconf-PTBP-details:executions}

\begin{description}
	\item[Successful increment] $\TwoCMinstruction:incr~c_1~goto~\TwoCMinstruction_1$ with $v_1<\param-1$ to ensure that the reception can be taken, and $v_2\geq v_1$ and $v_2+1\leq\param$\LongVersion{ (those assumptions only matter for the order of the transitions)}.
	\begin{align*}
	\scriptscriptstyle
	\left( \begin{array}{c}
	k^1,0 \\
	c_1,v_1 \\
	c_2,v_2 \\
	idle,w
	\end{array} \right)
	\xrightarrow{1,1,!!inc_1,\{2,3\}} 
	\left( \begin{array}{c}
	k^2,1 \\
	inc1_1^1,v_1+1 \\
	inc1^2_1,v_2+1 \\
	idle,w'
	\end{array} \right)
	\xrightarrow{\param-(v_2+1),3,\epsilon,\emptyset} 
	\left( \begin{array}{c}
	k^2,\param-v_2 \\
	inc1_1^1,v_1+\param-v_2 \\
	inc2_1^2,0 \\
	idle,w''
	\end{array} \right)
	\xrightarrow{v_2-v_1-1,2,!!nc_1,\{4\}} \\
	\left( \begin{array}{c}
	k^2,\param-v_1 -1\\
	inc2_1^1,\param-1 \\
	inc2_1^2,v_2-v_1-1\\
	nc1_1,0
	\end{array} \right)
	\xrightarrow{1,2,!!oc_1,\{4\}} 
	\left( \begin{array}{c}
	k^2,\param-v_1 \\
	idle,\param \\
	inc2_1^2,v_2-v_1\\
	nc2_1,1
	\end{array} \right)
	\xrightarrow{v_1,1,!!tick,\{3,4\}} 
	\left( \begin{array}{c}
	k^1_1,0\\
	idle,\param+v_1\\
	c_2,v_2 \\
	c_1,v_1+1
	\end{array} \right)
	\end{align*}

	\item[Failed increment] $\TwoCMinstruction:incr~c_1~goto~\TwoCMinstruction_1$ (assume there is no process in~$c_2$ for simplicity)
	\begin{align*}
	\scriptscriptstyle
	\left( \begin{array}{c}
	k^1,0 \\
	c_1,v_1 \\
	idle,w
	\end{array} \right)
	\xrightarrow{1,1,!!inc_1,\{2\}} 
	\left( \begin{array}{c}
	k^2,1 \\
	inc1_1^1,v_1+1 \\
	idle,w'
	\end{array} \right)
	\xrightarrow{\param-v_1-4,2,!!nc_1,\{4\}} \\
	\left( \begin{array}{c}
	k^2,\param-v_1 -3\\
	inc2_1^1,\param-3 \\
	nc1_1,0
	\end{array} \right)
	\xrightarrow{3,2,!!oc_1,\{4\}} 
	\left( \begin{array}{c}
	k^2,\param-v_1 \\
	idle,\param \\
	nc1_1,3
	\end{array} \right)
	\xrightarrow{v_1,1,!!tick,\{3,4\}} 
	\left( \begin{array}{c}
	k^1_1,0\\
	idle,\param+v_1\\
	nc1_1,v_1+3
	\end{array} \right)
	\end{align*}
	Notice that, for the rest of the execution, there will be no process encoding~$c_1$, thus no zero test can be achieved.
\end{description}

}
\end{document}